\documentclass[a4paper,UKenglish,cleveref, autoref, thm-restate, nolineno]{lipics-v2021}

\newboolean{cameraready}
\newboolean{long}
\newcommand{\AppendixSymbol}{\ding{72}}
\setboolean{cameraready}{false} 
\setboolean{long}{true} %
\ifthenelse{\boolean{cameraready}}{\setboolean{long}{false}}{}
\ifthenelse{\boolean{long}}{
	\NewDocumentEnvironment{prooflater}{m}{\begin{proof}}{\end{proof}\ignorespacesafterend}
	\NewDocumentEnvironment{proofsketch}{o +b}{}{\ignorespacesafterend}
	\newcommand{\restateref}[1]{}
	\NewDocumentEnvironment{statelater}{m}{}{}

	\NewDocumentCommand{\onlyShort}{+m}{}
	\NewDocumentCommand{\onlyLong}{+m}{#1}
	\NewDocumentCommand{\shortLong}{+m +m}{#2}
}{
	\NewDocumentEnvironment{prooflater}{m +b}{%
		\newcounter{#1-usages}\setcounter{#1-usages}{0}%
		\expandafter\global\expandafter\def\csname#1\endcsname{\stepcounter{#1-usages}\begin{proof}#2\end{proof}}%
		\AtEndDocument{\ifnumequal{\value{#1-usages}}{0}{
            \ifthenelse{\boolean{cameraready}}{}{\todo[inline]{use prooflater #1}}
        
        }{}}%
	}{\ignorespacesafterend}
	\NewDocumentEnvironment{proofsketch}{O{Proof sketch.}}{\begin{proof}[#1]}{\end{proof}\ignorespacesafterend}
	\usepackage{apptools}

        \ifthenelse{\boolean{cameraready}}{
	   \newcommand{\restateref}[1]{
            [\IfAppendix{\AppendixSymbol{}}{\AppendixSymbol{}}]
          }
        }{
            \newcommand{\restateref}[1]{[\IfAppendix{\hyperref[#1]{\AppendixSymbol{}}}{\hyperref[#1*]{\AppendixSymbol{}}}]}
        }

	\NewDocumentEnvironment{statelater}{m +b}{%
		\newcounter{#1-usages}\setcounter{#1-usages}{0}%
		\expandafter\global\expandafter\def\csname#1\endcsname{\stepcounter{#1-usages}#2}%
		\AtEndDocument{\ifnumequal{\value{#1-usages}}{0}{
        \ifthenelse{\boolean{cameraready}}{}{\todo[inline]{use statelater #1
        }}}{}}%
	}{\ignorespacesafterend}

	\NewDocumentCommand{\onlyShort}{+m}{#1}
	\NewDocumentCommand{\onlyLong}{+m}{}
	\NewDocumentCommand{\shortLong}{+m +m}{#1}
}

\pdfoutput=1 %
\ifthenelse{\boolean{cameraready}}{}{\hideLIPIcs}
\hideLIPIcs  %

\graphicspath{{./graphics/}}%

\bibliographystyle{plainurl}%

\title{Linear Layouts Revisited: \\
Stacks, Queues, and Exact Algorithms}
\titlerunning{Linear Layouts Revisited: Stacks, Queues, and Exact Algorithms}

\newcommand\blfootnote[1]{%
  \begingroup
  \renewcommand\thefootnote{}\footnote{#1}%
  \addtocounter{footnote}{-1}%
  \endgroup
}

\author{Thomas Depian}{Algorithms and Complexity Group, TU Wien, Vienna, Austria}{tdepian@ac.tuwien.ac.at}{https://orcid.org/0009-0003-7498-6271}{Project No. 10.47379/ICT22029 of the Vienna Science Foundation (WWTF).}

\author{Simon D.~Fink}{Algorithms and Complexity Group, TU Wien, Vienna, Austria}{sfink@ac.tuwien.ac.at}{https://orcid.org/0000-0002-2754-1195}{Project No. 10.47379/ICT22029 of the Vienna Science Foundation (WWTF) and Project No. 10.55776/Y1329 of the Austrian Science Fund (FWF).}

\author{Robert Ganian}{Algorithms and Complexity Group, TU Wien, Vienna, Austria}{rganian@gmail.com}{https://orcid.org/0000-0002-7762-8045}{Project No. 10.47379/ICT22029 of the Vienna Science Foundation (WWTF) and Projects No. 10.55776/Y1329, 10.55776/COE12
of the Austrian Science Fund (FWF).}

\author{Vaishali Surianarayanan}{University of California at Santa Barbara, Santa Barbara, CA 93106 USA}{vaishali@ucsb.edu}{https://orcid.org/0000-0003-3091-3823}{}

\authorrunning{T. Depian, S\,D. Fink, R. Ganian and V. Surianarayanan} %

\Copyright{Thomas Depian, Simon D. Fink, Robert Ganian and Vaishali Surianarayanan} %

\ccsdesc[300]{Theory of computation~Parameterized complexity and exact algorithms}

\keywords{stack layouts, queue layouts, parameterized algorithms, vertex integrity, Ramsey theory} %

\category{} %

\relatedversion{} %
\nolinenumbers %

\EventEditors{Anne Benoit, Haim Kaplan, Sebastian Wild, and Grzegorz Herman}
\EventNoEds{4}
\EventLongTitle{33rd Annual European Symposium on Algorithms (ESA 2025)}
\EventShortTitle{ESA 2025}
\EventAcronym{ESA}
\EventYear{2025}
\EventDate{September 15--17, 2025}
\EventLocation{Warsaw, Poland}
\EventLogo{}
\SeriesVolume{351}
\ArticleNo{13}
\usepackage{todonotes}
\usepackage{booktabs}
\usepackage{complexity}
\usepackage{pifont}
\usepackage{subcaption}
\usepackage{amsmath}
\usepackage{mathtools}
\usepackage{fancybox}
\usepackage{mdframed}
\usepackage{cite}
\usepackage{pifont}
\usepackage{xspace}
\usepackage{amssymb}
\usepackage{amsthm}

	\let\oldrestatable\restatable
	\def\restatable{\expandafter\oldrestatable}

\NewDocumentCommand{\linLs}{o o}{\ensuremath{\langle\prec^+\IfNoValueF{#1}{_{#1}},\sigma^+\IfNoValueF{#1}{_{#1}}\rangle\IfNoValueF{#2}{_{#2}}}\xspace}

\NewDocumentCommand{\linL}{o o}{\ensuremath{\langle\prec\IfNoValueF{#1}{_{#1}},\sigma\IfNoValueF{#1}{_{#1}}\rangle\IfNoValueF{#2}{_{#2}}}\xspace}
\NewDocumentCommand{\stackL}{o}{\ensuremath{\linL[#1][\mathcal{S}]}\xspace}
\NewDocumentCommand{\queueL}{o}{\ensuremath{\linL[#1][\mathcal{Q}]}\xspace}
\NewDocumentCommand{\pw}{o}{\ensuremath{\omega\IfNoValueF{#1}{(#1)}}\xspace}
\NewDocumentCommand{\LeftOfV}{o m}{\ensuremath{\text{Left}\IfNoValueF{#1}{^{#1}}_{#2}}\xspace}
\NewDocumentCommand{\LeftAtOfV}{o m}{\ensuremath{\LeftOfV[#1]{#2} \cup \{#2\}}\xspace}
\NewDocumentCommand{\RightOfV}{o m}{\ensuremath{\text{Right}\IfNoValueF{#1}{^{#1}}_{#2}}\xspace}
\NewDocumentCommand{\EdgesCrossingV}{o m}{\ensuremath{X\IfNoValueF{#1}{^{#1}}_{#2}}\xspace}
\NewDocumentCommand{\EdgesCrossingVPage}{o o m m}{\ensuremath{\EdgesCrossingV[#1]{#3} \cap \sigma\IfNoValueF{#2}{_{#2}}^{-1}(#4)}\xspace}
\NewDocumentCommand{\DirectedCut}{o m}{\ensuremath{(#2, \IfNoValueTF{#1}{\prec_{#2}}{#1})}\xspace}

\NewDocumentCommand{\StateProcessed}{m}{\ensuremath{L(#1)}\xspace}
\NewDocumentCommand{\StateNotProcessed}{m}{\ensuremath{R(#1)}\xspace}

\newcommand{\probname}[1]{{\normalfont\textsc{#1}}}

\newcommand{\bigoh}{\mathcal{O}}
\newcommand{\vi}{\textsf{\textup{vi}}}
\newcommand{\tw}{\textsf{\textup{tw}}}

\newcommand{\LabelOrdindary}{\ensuremath{\uparrow}\xspace}
\newcommand{\LabelArch}{\reflectbox{\rotatebox[origin=c]{180}{\ensuremath{\curvearrowright}}}\xspace}

\newcommand{\LabelImage}{\ensuremath{\{\LabelOrdindary, \LabelArch\}%
}\xspace}

\newcommand{\Size}[1]{\ensuremath{\left\vert #1 \right\vert}}
\newcommand{\BigO}[1]{\ensuremath{\mathcal{O}(#1)}}

\newcommand{\oldtodo}[2][noinline]{}

\crefname{AC}{$\mathcal{A}$C}{$\mathcal{A}$Cs}
\Crefname{AC}{$\mathcal{A}$C}{$\mathcal{A}$Cs}
\crefname{NC}{$\mathcal{N}$C}{$\mathcal{N}$Cs}
\Crefname{NC}{$\mathcal{N}$C}{$\mathcal{N}$Cs}

\begin{document}

\maketitle

\begin{abstract}
In spite of the extensive study of stack and queue layouts, many fundamental questions remain open concerning the complexity-theoretic frontiers for computing stack and queue layouts. A stack (resp.\ queue) layout places vertices along a line and assigns edges to pages so that no two edges on the same page are crossing (resp.\ nested).
We provide three new algorithms which together substantially expand our understanding of these problems:

\begin{enumerate}
\item A fixed-parameter algorithm for computing minimum-page stack and queue layouts w.r.t.\ the vertex integrity of an $n$-vertex graph $G$. This result is motivated by an open question in the literature and generalizes the previous algorithms parameterizing by the vertex cover number of $G$. The proof relies on a newly developed Ramsey pruning technique. Vertex integrity intuitively measures the vertex deletion distance to a subgraph with only small connected components.

\item An $n^{\bigoh(q \ell)}$ algorithm for computing $\ell$-page stack and queue layouts of page width at most $q$. This is the first algorithm avoiding a double-exponential dependency on the parameters. The page width of a layout measures the maximum number of edges one needs to cross on any page to reach the outer face. 

\item A $2^{\bigoh(n)}$ algorithm for computing $1$-page queue layouts. This improves upon the previously fastest $n^{\bigoh(n)}$ algorithm and can be seen as a counterpart to the recent subexponential algorithm for computing $2$-page stack layouts [ICALP'24], %
but relies on an entirely different technique.
\end{enumerate}
\end{abstract}

\section{Introduction}
\label{sec:introduction}
A linear layout \linL of a graph $G$ is a total ordering $\prec$ of its vertices and a partitioning $\sigma$ of its edges into $k$ \emph{pages} such that each page satisfies certain conditions. The two by far most commonly studied types of linear layouts are \emph{stack layouts} and \emph{queue layouts}; in the former, we require that no four vertices $a\prec b\prec c\prec d$ have the edges $ac$ and $bd$ placed on the same page, while for the latter we forbid any page from containing the edges $ad$ and $bc$. Intuitively, this corresponds to forbidding edges crossing and edges nesting (in a rainbow pattern), respectively---see \Cref{fig:example}. %
\onlyShort{\blfootnote{Due to space constraints, we defer full proofs of statements marked with \AppendixSymbol\ to the \ifthenelse{\boolean{cameraready}}{full version~\cite{ARXIV}}{appendix}.}}

\begin{figure}[h]
    \begin{center}
    \includegraphics[page=1]{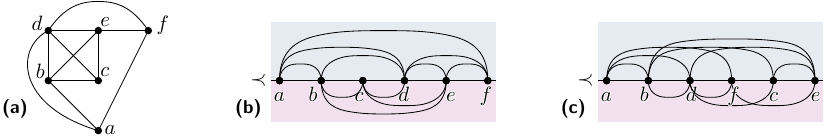}
    \end{center}
    \caption{A graph $G$ \textbf{\textsf{(a)}} together with a $2$-page stack layout \textbf{\textsf{(b)}} and $2$-page queue layout \textbf{\textsf{(c)}}. The two pages are colored blue and lilac, respectively.}
    \label{fig:example}
\end{figure}

Originally motivated from applications in VLSI~\cite{chung1987embedding,rosenberg1986book} and bioinformatics~\cite{haslinger1999rna}, stack and queue layouts have become the focus of extensive research. While there are many classical works studying the structural properties of the two notions~\cite{heath1987embedding,yannakakis1989embedding,heath1992laying,dujmovic2004linear,dujmovic2005layout,dujmovic2007graph,dujmovic2011book,dujmovic2020planar}, several more recent studies have targeted exact and parameterized algorithms for computing stack and queue layouts that minimize the number of used pages. These results are inherently aimed at circumventing the long-known intractability of these problems: the \NP-hardness of determining whether an input graph admits a $2$-page stack layout or a $1$-page queue layout has been established over thirty years ago~\cite{yannakakis1989embedding,heath1992laying}.

To that end, recent works have pursued the use of structural graph parameters to establish tractability on graphs which are ``well-structured''. Computing $\ell$-page stack layouts is now known to be fixed-parameter tractable w.r.t.\ the vertex cover number~\cite{bhore2020parameterized} or, alternatively, the feedback edge number of the input graph $G$~\cite{GanianMOPR24}. Computing $\ell$-page queue layouts is likewise known to admit a fixed-parameter algorithm w.r.t.\ the vertex cover number of $G$~\cite{bhore2022parameterized}. One drawback of these results is that they only yield tractability under highly restrictive graph parameters, in the sense of achieving low values only on very ``simple'' graphs. While computing $1$-page queue layouts is also known to be fixed-parameter tractable when parameterized by the treedepth of $G$~\cite{bhore2022parameterized}, for general $\ell$ it was repeatedly posed as an open question whether the aforementioned vertex-cover based algorithms can be lifted to less restrictive structural graph parameters~\cite{dujmovic2011book,bhore2020parameterized,ganian2021parameterized,bhore2022parameterized,GanianMOPR24}. 
Another recent direction is the establishment of tighter algorithmic upper bounds for special cases: while $\ell$-page stack and queue layouts can both be computed in time $n^{\bigoh(n)}$ via trivial brute-force algorithms, $2$-page stack layouts are now known to admit a subexponential algorithm~\cite{GanianMOPR24}.

\subparagraph{Contributions.}
As our \textbf{first contribution}, we lift the aforementioned fixed-parameter tractability of computing linear layouts to a less restrictive structural graph parameter:

\begin{theorem}
\label{thm:vifpt}
Computing a minimum-page stack layout and minimum-page queue layout is fixed-parameter tractable w.r.t.\ the \emph{vertex integrity} of the input graph.
\end{theorem}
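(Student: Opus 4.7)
The plan is to extend the known vertex-cover-based algorithms for stack and queue layouts to the strictly more general parameter of vertex integrity via a novel Ramsey-style pruning reduction. Let $k$ denote the vertex integrity of $G$ and fix a modulator $S$ with $|S|\le k$ such that every component of $G - S$ has at most $k$ vertices; such an $S$ is computable in FPT time. I would assign each component of $G-S$ a \emph{type} recording its isomorphism class together with the $S$-neighbourhood of each of its vertices; the number of types is bounded by a function $f(k)$, and so only a bounded number of distinct types can appear even though the number of components can be linear in $n$.

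The algorithm first guesses the relative order of the vertices in $S$ (at most $k!$ choices) and the page assignment of every $S$-incident edge. This fixes the ``backbone'' of the layout; the $S$-vertices divide the target linear order into $k+1$ slots, and each component of $G-S$ is then specified by a \emph{realization}, namely the multiset of slots occupied by its vertices, the internal order within each slot, and the page assignment of its internal edges. The number of possible realizations per type is bounded by a function of $k$ and $\ell$.

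The heart of the proof is a Ramsey pruning rule: whenever the number of components of some type $\tau$ exceeds a threshold $T(k,\ell)$, we may safely delete one such component. The nontrivial direction is to show that any $\ell$-page layout of $G$ minus a $\tau$-component $C$ extends to an $\ell$-page layout of $G$. Since the remaining $\tau$-components are numerous, Ramsey's theorem furnishes a large subcollection that is not only identical in realization but \emph{homogeneous} with respect to how every other edge of the layout interacts with its members. We then splice a fresh copy of $\tau$ into the layout immediately adjacent to one member of this homogeneous subcollection, assigning edges analogously. The homogeneity guarantees that no new crossing (for stacks) or nesting (for queues) arises: every potential conflict with an edge outside the subcollection is already witnessed against the neighbour and absent, while intra-copy edges of the inserted component run parallel to those of the neighbour.

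Iterating this reduction yields an equivalent graph of size bounded by a function of $k$ and $\ell$, which can be solved either by brute force or by encoding the remaining choices as an integer linear program with $f(k,\ell)$ variables. The main obstacle I foresee lies in the Ramsey step itself: because stack and queue constraints are non-local, a naive pigeonhole over realizations is insufficient, as long edges whose span crosses the insertion point can interact with the inserted copy in complicated ways. Formalising the ``correct'' notion of homogeneity---rich enough to rule out every such long-range interaction, yet still obtainable from Ramsey's theorem with only a $T(k,\ell)$ bound---appears to be the central technical difficulty the proof must overcome.
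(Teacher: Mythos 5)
Your high-level plan matches the paper's on several points: fix a modulator $S$ realising the vertex integrity, classify components of $G-S$ into boundedly many types (``twins''), observe that only a bounded number of types can occur, and prune surplus copies of over-represented types via a Ramsey-style argument to obtain a kernel of size $f(k,\ell)$. There is, however, a genuine gap in your core step, and you have in fact already put your finger on it.

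Your safety argument takes a layout $\linL$ of $G-C$ and tries to \emph{reinsert} $C$ by splicing a fresh copy next to a homogeneous subcollection of its twins. The paper explicitly warns against this approach: for certain layouts $\linL$ of $G-C$, reinsertion likely fails. Worse, the homogeneity you postulate---``how every other edge of the layout interacts with its members''---requires Ramsey with a number of colors that is not bounded by $k$ and $\ell$ alone, because a single long edge of $\linL$ can relate to a component in a way that depends on which of the $\Theta(n)$ spine positions its endpoints occupy. Pigeonholing over \emph{realizations} of the twins themselves controls nothing about such cross-cutting edges, so a threshold $T(k,\ell)$ is too small. (A secondary issue: guessing ``the page assignment of every $S$-incident edge'' as part of the backbone is itself not FPT, since there can be $\Theta(n)$ such edges.)

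The paper never attempts reinsertion. Given a layout $\linL$ of the reduced graph $G'$, it colors each pair of large groups $X\prec Y$ by $\mathsf{info}_{\linL}(X,Y)$, the restriction of $\linL$ to $G[S\cup S'\cup X\cup Y]$ pushed forward isomorphically onto a \emph{fixed} canonical copy $G[S\cup S'\cup L\cup L']$. Because this canonical graph has size bounded in $k$ and $\ell$, the number of colors is bounded, and Ramsey yields a monochromatic triple $X\prec Y\prec Z$. A structure lemma then shows the vertices of $X$, $Y$, $Z$ appear as consecutive solution-blocks with a consistent ascending/descending pattern. The paper then \emph{discards the rest of $\linL$ entirely} and constructs a brand-new layout of a supergraph $G^+$ of $G$ by replacing each block $B$ with $t$ consecutive isomorphic copies (one per large group of $G$), with page assignments copied from the $Y$-copy. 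Any conflict in the new layout would collapse to a conflict inside $\linL$ restricted to $S\cup S'\cup X\cup Y\cup Z$, hence cannot arise. This bypass---building fresh from a guiding sublayout rather than extending the given layout---is precisely the idea your proposal is missing and the reason the naive Ramsey/pigeonhole over realizations does not suffice.
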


The vertex integrity $\vi(G)$ of a graph $G$ measures, roughly speaking, how many vertex deletions are required to decompose $G$ into small connected components, and has been used in the design of parameterized algorithms for a variety of challenging problems~\cite{dvovrak2017solving,bodlaender2020subgraph,ganian2021structural,ganian2021structuralpaths,dvovrak2021complexity,gima2022exploring,lampis2024fine,hanaka2024parameterized}.\footnote{Some prior works use the \emph{fracture number}, which is parametrically equivalent to vertex integrity.}
More precisely, $\vi(G)$ is the minimum integer $k$ satisfying the following: there exists a vertex set $X\subseteq V(G)$ such that for each connected component $C$ of $G-X$, $|V(C)\cup X|\leq k$. As vertex integrity is upper-bounded by the vertex cover number plus one, Theorem~\ref{thm:vifpt} generalizes the fixed-parameter tractable algorithms for computing stack and queue layouts w.r.t.\ the vertex cover number by Bhore, Ganian, Montecchiani, and Nöllenburg~\cite{bhore2020parameterized,bhore2022parameterized}.
Moreover, since the vertex integrity is sandwiched between the vertex cover number and treedepth, our result can be seen as a stepping stone towards solving the problem on more general parameterizations.

To establish Theorem~\ref{thm:vifpt}, we introduce a novel proof technique we call \emph{Ramsey pruning}. On a high level, the algorithm underlying the computation is very simple: assuming $\Size{G}$ is larger than some pre-designated function $k$, it recursively identifies a connected component $C$ of $G-X$ such that $G-X$ contains sufficiently many copies of $C$, and deletes $C$ from the instance. The difficulty lies in proving that this operation is safe, i.e., that $G-C$ does not admit a layout with fewer pages than $G$. The ``usual'' approach for such a proof would be to show that $C$ can be reinserted into any hypothetical linear layout \linL of $G-C$ without increasing the number of pages; however, this not only seems excruciatingly difficult to prove, but we also believe it to be false for certain choices for \linL.
Ramsey pruning avoids this issue by using Ramsey-type arguments to argue that \linL must contain a \emph{guiding sub-layout} $\left\langle\prec',\sigma'\right\rangle$---a linear layout of some carefully selected subgraph of $G-C$---with certain well-defined properties. We then use the well-formedness of $\left\langle\prec',\sigma'\right\rangle$ to build a brand new linear layout $\left\langle\prec^*,\sigma^*\right\rangle$ of $G$, thus establishing that $G$ and $G-C$ indeed require the same number of pages. We believe this new technique to be generic and applicable to other problems under the same parameterization, as suggested by the fact that unlike the previous vertex-cover based algorithms our approach works for both stack and queue layouts with almost no problem-specific changes required. %

While Theorem~\ref{thm:vifpt} pushes the frontiers of tractability for our two problems of interest in a complexity-theoretic sense, our use of Ramsey-type arguments means that the obtained running time bound has an entirely impractical dependency on the parameter (see Lemma~\ref{lem:kernel}). In fact, up to now none of the known parameterized algorithms for computing $\ell$-page stack or queue layouts have a runtime parameter dependency that would be better than double-exponential---the parameterized algorithms w.r.t.\ the vertex cover~\cite{bhore2020parameterized,bhore2022parameterized} and feedback edge numbers~\cite{GanianMOPR24} all preprocess the graph to obtain an equivalent instance (a \emph{kernel}) whose size is exponential in the parameter, and then solve that equivalent bounded-size instance via brute force. As our \textbf{second contribution}, we provide the first single-exponential algorithms (in the parameters) capable of solving these problems for arbitrary fixed choices of $\ell$:

\begin{restatable}\restateref{thm:pagewidth}{theorem}{thmpagewidth}
\label{thm:pagewidth}
Given integers $\ell$ and $q$ along with an $n$-vertex graph $G$, we can compute an $\ell$-page stack or queue layout with page width at most $q$ of $G$ (if one exists) in time $n^{\bigoh(q\cdot\ell)}$.
\end{restatable}

Here, the page width measures the maximum number of edges one needs to cross to reach any vertex from the outer face (see also \Cref{sec:preliminaries}). The page width of linear layouts\footnote{The term \emph{cutwidth} has been used interchangeably with page width in the past~\cite{chung1987embedding}; here, we use the latter in order to disambiguate from the related graph parameter cutwidth.} has been studied in several early works~\cite{chung1987embedding,heath1987embedding,stohr1988trade,stohr1991pagewidth}
and it is generally desirable to obtain layouts not only using few pages, but also with low page width---in fact, the original papers introducing stack layouts explicitly targeted algorithms optimizing both measures.
However, the algorithmic applications of page width as a parameter have only been investigated in a recent paper on extending incomplete linear layouts~\cite{depian2024parameterized}.
We remark that while Theorem~\ref{thm:pagewidth} merely provides a so-called \emph{\XP\ algorithm} in a complexity-theoretic sense, it may still be more efficient in practice than known fixed-parameter algorithms for the problem---especially if the aim is to compute layouts with low pagewidth. 

A natural approach towards establishing Theorem~\ref{thm:pagewidth} would be to use dynamic programming in order to construct the sought-after linear layout in a ``left-to-right'' fashion; however, the issue is that doing this directly would require us to store roughly $2^n$ possible subsets of previously processed vertices. To circumvent this, we obtain new insights into the decomposability of layouts with bounded page width, allowing us to construct an $n^{\bigoh(q\cdot\ell)}$-size auxiliary \emph{state graph} $H$, where we show that the dynamic computation of a sought-after layout can be represented as an easily computable path in $H$. 

As our final \textbf{third contribution}, we take a step back from the multivariate analysis of these problems and recall that the trivial $n^{\bigoh(n)}$ barrier was only recently overcome for computing $2$-page stack layouts~\cite{GanianMOPR24}, i.e., the lowest-page stack layouts giving rise to an \NP-complete problem. Here, we show that the trivial $n^{\bigoh(n)}$ barrier can also be overcome when aiming for the lowest-page queue layouts which still give rise to \NP-completeness:

\begin{restatable}\restateref{thm:singleexp}{theorem}{thmsingleexp}
\label{thm:singleexp}
Given an $n$-vertex graph $G$, we can compute a $1$-page queue layout of $G$ (if one exists) in time $2^{\bigoh(n)}$.
\end{restatable}

We remark that the running time bound provided by Theorem~\ref{thm:singleexp} is worse than the $2^{\bigoh(\sqrt{n})}$ bound previously obtained for $2$-page stack layouts~\cite{GanianMOPR24}. The reason for this is that the latter result relied on the equivalence of $2$-page stack layouts with the existence of \emph{subhamiltonian paths}; the authors of that previous work then essentially obtained a single-exponential fixed-parameter algorithm for finding such paths w.r.t.\ the graph parameter treewidth, improving on the previously established fixed-parameter tractability of the problem~\cite{bannister2018crossing}. 

However, emulating the same approach seems difficult in the queue layout setting: $1$-page queue layouts also have an equivalent formulation in terms of so-called \emph{arched-level planar} drawings~\cite{heath1992laying}, but it is not at all clear whether computing such drawings is fixed-parameter tractable w.r.t.\ treewidth (not to mention the fact that one would need a single-exponential algorithm).
Instead, our proof of Theorem~\ref{thm:singleexp} relies on a Turing reduction from computing arched-level planar drawings to $2^{\bigoh(n)}$ many instances of a well-studied problem called \probname{Level Planarity}, which is known to be solvable in polynomial time~\cite{heath1996recognizing,junger1998level}.

\subparagraph{Related Work.} 
We refer to the dedicated survey for an overview of many earlier structural results concerning linear layouts~\cite{dujmovic2004linear}.
Key structural results in the area include the existence of $4$-page stack layouts~\cite{yannakakis1989embedding} 
and $42$-page queue layouts~\cite{dujmovic2020planar,bekos2023improved}
for all planar graphs.
Researchers have also studied the notion of \emph{mixed layouts}, which are linear layouts with some pages behaving like stack and some like queue layouts~\cite{angelini2022mixed,katheder_et_al:LIPIcs.STACS.2025.56}.
We note that due to the techniques used in their proofs, it seems very likely that Theorems~\ref{thm:vifpt} and~\ref{thm:pagewidth} could be adapted to the mixed layout setting with minimum changes required.
Finally, we remark that parameterized algorithms for stack and/or queue layouts have additionally been considered in the extension setting (where the task is to complete a provided partial layout)~\cite{depian2024parameterized,DFGN.PEQ.2025},
in the upward-planarity setting (where the graph is directed and edges must be oriented in a left-to-right fashion along the layout)~\cite{bhore2021upward} 
and when the vertex ordering in the layout is fixed~\cite{liu2021parameterized,agrawal2024eliminating}.

\section{Preliminaries}
\label{sec:preliminaries}

For an integer $\ell \geq 1$ we let $[\ell]$ denote the set $\{1, 2, \ldots, \ell\}$.
We assume the reader to be familiar with standard graph terminology~\cite{Die.GT4.2012}.
Without loss of generality, we assume all input graphs to be connected, have $n$ vertices and $m$ edges.
For a set of vertices $V' \in V(G)$, we let $G[V']$ denote the graph induced on $V'$.
Furthermore, for a set of edges $E' \in E(G)$, we let $V(E')$ denote the set of its endpoints and $G[E'] = (V(E'), E')$.
A \emph{cut} $(A, B)$ of $G$ is a partition of the vertices in $A \subseteq V(G)$ and $B = V(G) \setminus A$.
We call $F = \{uv \in E(G) \mid u \in A, v \in B\}$ a \emph{cut-set} of \emph{size} $\Size{F}$ that \emph{induces} the cut $(A, B)$.
Throughout the paper, we omit an explicit reference to $G$ if it is clear from the context, e.g., we write $V$ and $E$ instead of $V(G)$ and $E(G)$.

Given a linear order $\prec$ of a graph $G$ and two vertices $u, v \in V$, we say that $u$ is \emph{left} of $v$ if $u \prec v$ and \emph{right} of $v$ if $v \prec u$.
The vertices $u$ and $v$ are \emph{consecutive on the spine} if they occur consecutively in $\prec$, i.e., there is no vertex $w \in V$ such that $u \prec w \prec v$ or $v \prec w \prec u$ holds.
We address with $\LeftOfV[\prec]{v} \coloneqq \{u \in V \mid u \prec v\}$ all vertices $u \in V$ left of $v$ with respect to $\prec$ and define $\RightOfV[\prec]{v} \coloneqq V \setminus (\LeftAtOfV[\prec]{v})$.
Finally, we let $\EdgesCrossingV[\prec]{v} \coloneqq \{uw \in E \mid u \in \LeftAtOfV{v}, w \in \RightOfV{v}\}$ denote the set of edges that \emph{span} the spine between $v$ and its right neighbor, i.e., all edges with one endpoint left of (or at) $v$ and one right of $v$.

We assume familiarity with the basic foundations of parameterized complexity theory~\cite{CFK+.PA.2015}. All algorithms obtained in this work are exact, deterministic, constructive and rely exclusively on computable functions.
To express some of our bounds, we will occasionally use the Knuth notation $\uparrow\uparrow$ where for an integer $z$, $2\uparrow\uparrow z$ represents an exponential tower of $2$'s of height $z$.

\subparagraph*{Linear Layouts.}
Let $u_1v_1$ and $u_2v_2$ be two edges of $G$.
We say that they \emph{cross} under a linear order $\prec$ if $u_1 \prec u_2 \prec v_1 \prec v_2$ holds, and they \emph{nest} if $u_1 \prec u_2 \prec v_2 \prec v_1$ holds.
For an integer $\ell\geq 1$, let $\sigma_G \colon E(G) \to [\ell]$ be a function that assigns each edge to a \emph{page} $p \in [\ell]$.
The \emph{linear layout} \linL[G] is a \emph{stack \textup{(}queue\textup{)} layout} if no two edges $e_1, e_2 \in E$ with $\sigma_G(e_1) = \sigma_G(e_2)$ cross (nest).
We call~$\prec_G$ the \emph{spine order} and~$\sigma_G$ the \emph{page assignment}.
For the remainder of the paper, we write $\prec$ and $\sigma$ if the graph~$G$ is clear from context.
\onlyLong{We use \stackL[G] and \queueL[G] to differentiate between stack and queue layouts, respectively. }
\oldtodo{$\prec_{G[X]}$ is pD not induced -> change def}

The \emph{page width} \pw[\linL] of an $\ell$-page linear layout \linL corresponds to the maximum number of edges on a single page that span the spine between a vertex and its right neighbor, i.e., 
	$\pw[\linL] \coloneqq \max_{p \in [\ell]}\max_{v \in V(G)}\Size{\EdgesCrossingVPage{v}{p}}$.
We call an $\ell$-page linear layout with page width $q$ an \emph{$\ell$-page $q$-width linear layout}, or simply a \emph{solution} when the $\ell$ and $q$ are clear from context. In line with the literature, we assume no explicit bound on the page width when $q$ is not specified.

\subparagraph*{Vertex Integrity.} A graph $G$ has \emph{vertex integrity} $\vi(G) = p$ if $p$ is the smallest integer with the following property: $G$ contains a vertex set $S$ such that for each connected component $H$ of $G - S$, $|V(H) \cup S| \leq p$. One may observe that the vertex integrity is upper-bounded by the size of a minimum vertex cover in the graph (i.e., the vertex cover number) plus one. The vertex integrity of an $n$-vertex graph along with a corresponding partition into $S$ and $\mathcal{C}=G-S$ can be computed in time $\mathcal{O}(p^{p+1} \cdot n)$~\cite{DrangeDH16}.

\section{A Fixed-Parameter Algorithm Parameterized by Vertex Integrity}
\label{sec:fpt-vi}

In this section, we obtain a fixed-parameter algorithm that takes as input a positive integer $\ell$ and a graph $G$, is parameterized by $\ell+\vi(G)$ and computes an $\ell$-page stack and/or  queue layout of $G$ (or both), if such layouts exist. We begin by noting that using well-known relationships between vertex integrity, minimum-page stack and queue layouts and the graph parameter \emph{treewidth} ($\tw$), we can assume $\ell$ to be upper-bounded by a function of $\vi(G)$ thanks to the well-known relation $\tw(G) \leq \vi(G)$:

\begin{proposition}[{[\citen{dujmovic2007graph,Wiechert17}]}]
	The number of pages in a minimum stack and queue layout of a graph $G$ is upper-bounded by $\tw(G)+1\leq \vi(G)+1$ and $2^{\tw(G)} +1\leq 2^{\vi(G)}+1$, respectively.
\end{proposition}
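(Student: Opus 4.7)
The proposition bundles two components: well-established page-number bounds expressed in terms of treewidth, and the routine inequality $\tw(G)\leq \vi(G)$. I would therefore decompose the proof into three steps.

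First, I would invoke the classical construction of Dujmovi\'c, Morin and Wood to obtain the stack bound: starting from a tree decomposition of width $w=\tw(G)$, chordalize $G$ along the decomposition to produce a $(w+1)$-tree supergraph $G'\supseteq G$, use a perfect elimination order of $G'$ as the spine $\prec$, and assign each back-edge of a vertex $v$ to one of $w+1$ pages according to its rank among $v$'s at most $w+1$ later-clique-neighbors. Standard arguments then show that no two edges on a common page can cross, proving that the stack number of $G$ is at most $\tw(G)+1$.

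Second, I would invoke Wiechert's analogous result for queue layouts: given a tree decomposition of width $w$, one builds a spine from a suitable traversal of the decomposition tree and labels each edge by information about the chain of bags it traverses; a careful page-assignment driven by these labels limits the queue number to $2^{\tw(G)}+1$. Both bounds are black-box applications of the cited results, so I would simply reference them.

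Third, I would close the gap between treewidth and vertex integrity by an explicit construction. If $\vi(G)=p$ is witnessed by a vertex set $S$ together with the component collection $\mathcal{C}=\{C_1,\dots,C_t\}$ of $G-S$ such that $|V(C_i)\cup S|\leq p$ for every $i$, I would form the star-shaped tree decomposition $(T,\{B_x\}_{x\in V(T)})$ whose center bag is $S$ and whose leaves carry the bags $B_i = V(C_i)\cup S$. Every vertex of $S$ lies in all bags (a connected subtree), every vertex of $V(C_i)$ lies in the single leaf $B_i$, and every edge of $G$ either has both endpoints inside $S$ or inside some $V(C_i)\cup S$, since $G-S$ has no edges between distinct components. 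Each bag has size at most $p$, so the width of this decomposition is at most $p-1$ and hence $\tw(G)+1\leq \vi(G)$. The queue inequality $2^{\tw(G)}+1\leq 2^{\vi(G)}+1$ then follows from monotonicity of $2^{(\cdot)}$.

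The only genuinely computational step is verifying the three tree-decomposition axioms for the star construction, which is straightforward; I do not anticipate any real obstacle, since the substantive content is imported wholesale from the two cited papers.
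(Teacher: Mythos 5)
Your proposal is correct. The paper treats this proposition as an imported fact with citations and gives no proof of its own, so there is no in-paper argument to compare against; your decomposition into (i) the treewidth-to-stack bound of Dujmovi\'c et al., (ii) Wiechert's treewidth-to-queue bound, and (iii) the $\tw$--$\vi$ relation is exactly the right structure, and the star-shaped tree decomposition you build to establish the last piece is the standard construction. It is worth noting that what you actually prove in step three is the slightly sharper inequality $\tw(G)+1\leq \vi(G)$: with $\vi(G)=p$ witnessed by $S$ and components $C_1,\ldots,C_t$ of $G-S$, every bag $V(C_i)\cup S$ has size at most $p$ (and the center bag $S$ as well), so the width is at most $p-1$. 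The paper only asserts the weaker $\tw(G)\leq \vi(G)$ in its prose and in the proposition, and this looser form is all that is needed downstream, but your version is also correct and in line with how this relation is usually stated. The one minor imprecision is terminological: chordalizing a width-$w$ tree decomposition yields a subgraph of a $w$-tree (maximal cliques of size $w+1$), not a $(w+1)$-tree; since you are invoking the stack bound as a black box anyway, this does not affect the argument.
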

The algorithm operates by obtaining a problem kernel, as formalized below; recall $p = \vi(G)$.

\begin{lemma}
	\label{lem:kernel}
	There is an $\mathcal{O}(p^{p+1} \cdot n)$ time algorithm that takes a graph $G$ with an integer $\ell$ and outputs a subgraph $G'$ of $G$ (a \emph{kernel}) of size at most $\big(2\uparrow\uparrow (p\cdot 2^{2p^2})\cdot 2 + 6\big)^{\ell \cdot p}$ with the following property: $G'$ admits an $\ell$-page stack (or queue) layout $\linL[G']$ if and only if so does $G$. Moreover, such a layout for $G$ can be computed from $\linL[G']$ in polynomial time.
\end{lemma}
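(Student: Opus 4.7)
The plan is to first compute the vertex-integrity decomposition $(S,\mathcal{C})$ with $|S|\leq p$ in time $\mathcal{O}(p^{p+1}\cdot n)$ as recorded in the preliminaries, and then iteratively delete connected components of $\mathcal{C}$ whose ``type'' is overrepresented. Two components $H_1, H_2 \in \mathcal{C}$ share a \emph{type} if the labelled graphs $G[V(H_i)\cup S]$ (with $S$ rigidly labelled) are isomorphic; since $|V(H)\cup S|\leq p$, the number of types is bounded by $2^{2p^2}$. The reduction rule I would apply is: if some type has more than a Ramsey threshold $T := 2\uparrow\uparrow(p\cdot 2^{2p^2})$ copies in $\mathcal{C}$, delete one such copy $C$. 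Exhaustive application leaves at most $T$ copies per type, and combining this with $|S|\leq p$, the per-component bound $p$, and the $\ell$ pages yields the advertised kernel size.

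The principal technical obligation is safety of deletion: that $G$ admits an $\ell$-page stack (resp.\ queue) layout if and only if $G-C$ does. The ``only if'' direction is trivial by restriction. For the other direction, following the Ramsey-pruning outline sketched in the introduction, I would start from any layout \linL of $G-C$, attach to each remaining copy $H$ of $C$'s type a \emph{fingerprint} encoding the relative order of $V(H)$ among $S$ in $\prec$ together with the page assignment of every edge incident to $V(H)$, and use pigeonhole (over the $f(p,\ell)$-many possible fingerprints) to extract a large sub-collection of copies sharing one fingerprint. Within this class I would invoke iterated Ramsey's theorem — once per vertex of $S$ and per pairwise edge-behaviour type, which is what drives the tower-of-twos threshold — to obtain a \emph{guiding sub-layout} $\langle\prec',\sigma'\rangle$ of the subgraph induced by $S$ together with the selected copies, whose pairwise spine-orderings and page patterns are uniform.

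Crucially, the guiding sub-layout is used not to reinsert $C$ into \linL — an operation that, as noted in the introduction, can genuinely fail for some layouts — but to \emph{build a brand-new layout} $\langle\prec^*,\sigma^*\rangle$ of $G$ from scratch, using $\langle\prec',\sigma'\rangle$ as a template for the positioning of $S$, of the selected copies, and of $C$ itself, and extending to the remaining components of $\mathcal{C}\setminus\{C\}$ in a manner consistent with their placement in \linL. The uniformity of $\langle\prec',\sigma'\rangle$ guarantees that every edge incident to $C$ in $\langle\prec^*,\sigma^*\rangle$ mirrors an edge already conflict-free in the guiding sub-layout. I expect the main difficulty to be twofold: first, choosing the fingerprint granularity and Ramsey iteration depth precisely enough that a single guiding sub-layout prohibits both crossings and nestings simultaneously — which is what makes the argument work, with essentially no changes, for both stack and queue layouts; and second, defining $\sigma^*$ so that intertwining $C$ with the remaining components of $\mathcal{C}\setminus\{C\}$ does not introduce new conflicts elsewhere along the spine, which is where the repeated Ramsey lifting across all $p$ vertices of $S$ and $2^{2p^2}$ types becomes unavoidable.
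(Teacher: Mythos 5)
Your high-level strategy matches the paper's Ramsey pruning technique: define types via $S$-labelled isomorphism, prune over-represented types, and prove safety by extracting a uniform guiding sub-layout via pigeonhole plus Ramsey, then rebuilding a fresh layout from that template rather than attempting a local reinsertion. That is exactly the right shape of argument, and you correctly anticipate that the same machinery should cover crossings and nestings simultaneously.

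However, there is a genuine gap in how the reference frame is chosen. Your fingerprint for a copy $H$ records positions relative to $S$ only, and your guiding sub-layout is over $S$ together with selected copies of the single type being pruned. But edges live either inside a component or between a component and $S$, so an edge of the deleted copy $C$ (or of a selected copy) can conflict with an edge inside some \emph{other} small component $D$: whether they cross or nest depends on where $D$'s vertices sit, and nothing in your fingerprint or Ramsey invariant constrains that. Your escape hatch, \emph{``extending to the remaining components of $\mathcal{C}\setminus\{C\}$ in a manner consistent with their placement in $\linL$,''} is precisely where the argument breaks: the selected copies and $C$ have been relocated to template positions, so ``consistency with $\linL$'' for the other components is no longer well-defined and can introduce fresh conflicts. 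The paper closes this gap by (i) iteratively absorbing \emph{every} equivalence class that fails to be $f(\ell,p,x)$-large into a frozen set $S'$ with a \emph{moving} threshold ($x = |S\cup S'|$ grows, so the threshold grows, so more classes become small and get absorbed, until a fixed point), and (ii) bundling one representative from \emph{each} surviving large class into a single ``large group,'' so the Ramsey clique is on these bundles and the colour $\textsf{info}_{\linL}(X,Y)$ is the full sub-layout of $G[S\cup S'\cup X\cup Y]$ — including where the small components sit. Then any conflict in the rebuilt layout $\linLs$ lives entirely inside $S\cup S'\cup X_i\cup X_j$ for two large groups, which is isomorphic to a conflict-free sub-layout of the kernel. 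Your single-type, fixed-threshold deletion rule with an $S$-only reference frame does not reconstruct this invariant. A secondary issue: your threshold $T = 2\uparrow\uparrow(p\cdot 2^{2p^2})$ does not depend on $\ell$, but the number of possible fingerprints/sub-layout colours does (via the page assignment), so the Ramsey threshold must grow with $\ell$ as well, as in the paper's $f(\ell,p,x)=2^{2^{\ell\cdot x^2\cdot 2^{12p^2}}}$.
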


Lemma~\ref{lem:kernel} directly implies Theorem~\ref{thm:vifpt}, and the rest of this section is dedicated to proving this lemma. For the following, let us fix a graph $G$, positive integers $\ell$ and $p$, and a choice of $S \subseteq V(G)$ witnessing $\vi(G)=p$ as in \Cref{sec:preliminaries}. Further let $\mathcal{C}$ be the set of connected components of $G-S$. First, we define a notion of ``component-types'' which groups components in $\mathcal{C}$ that exhibit the same outside connections and internal structure.

\begin{definition} We say two graphs $H_0, H_1 \in \mathcal{C}$ are \emph{twins}, denoted $H_0 \sim H_1$, if there exists a canonical isomorphism $\alpha$ from $H_0$ to $H_1$ such that for each vertex $u \in V(H_0)$ and each $v \in S$, $uv \in E(G)$ if and only if $\alpha(u)v \in E(G)$. 
\end{definition}

\begin{restatable}\restateref{obs:size-equiv-class}{lemma}{observationRamseySizeEquivClass}
\label{obs:size-equiv-class}
Each graph $H\in \mathcal{C}$ has at most $p$ vertices, $\sim$ is an equivalence relation and the number of equivalence classes in $[\sim]$ is upper-bounded by $p \cdot 2^{2p^2}$. Moreover, a partition of connected components into $[\sim]$ can be computed in time at most $\mathcal{O}(p \cdot 2^{2p^2}\cdot n)$.
\end{restatable}
\begin{prooflater}{pobservationRamseySizeEquivClass}
By definition of $\mathcal{C}$, each $H \in \mathcal{C}$ has at most $p$ vertices. So the number of non-isomorphic graphs can be upper-bounded by $p \cdot 2^{p^2}$. Since $|S| \leq p$, there are at most $p^2$ possible edges between $S$ and each $H \in \mathcal{C}$. Hence, we have $[\sim] \leq p \cdot 2^{2p^2}$. For the running time, it suffices to process the connected components of $G-S$ in an arbitrary order and use exhaustive branching over $\alpha$ to determine (in time at most $\mathcal{O}(p \cdot 2^{2p^2})$) which of the equivalence classes in $[\sim]$ it belongs to.
\end{prooflater}

We now introduce the notion of large equivalence classes based on their size. We then use this to define what we call a large group of vertices---one that contains exactly one representative
from each large equivalence class. Our kernel will keep a bounded number of these large groups.
\begin{definition}
\label{defn:large-eq-class}
Let $k$ be a positive integer, an equivalence class $[H]$ of $\sim$ is said to be {\em $k$-large} if $|[H]|\geq k$. Further a vertex set $L \subseteq V(G)$ is called a {\em $k$-large group} if the induced subgraph $G[L]$ is a disjoint union of exactly one graph from each $k$-large equivalence class of~$\sim$.
\end{definition}

Next we define a special induced subgraph that will serve as our kernel. The definition is based on carefully choosing a $k$ that is bounded by a computable function of $p$ and $\ell$ so that keeping only $k$ many $k$-large groups along with the small parts of the graph suffices to capture the necessary structure. 
Towards this, let us first fix $f(\ell,p,x):=2^{2^{\ell\cdot x^2\cdot 2^{12p^2}}}$ to be a computable function that is large enough to apply our Ramsey-type arguments later on; here, $x$ will be an integer that represents the size of a deletion set (initially $S$, but this will be updated iteratively in the proof of Lemma~\ref{lem:reducedcomputation}). Moreover, let $g(\ell,p)$ be a computable function of $\ell$ and $p$ that will upper-bound our nested application of the function $f$ in that same proof; to provide a concrete bound, we set $g(\ell, p):= \big(2\uparrow\uparrow (p\cdot 2^{2p^2})\cdot 2 + 4\big)^{\ell \cdot p}$.

\begin{definition}
\label{def:red-graph}
An induced subgraph $G'$ of $G$ is said to be a {\em reduced graph} of $G$ if there exists a positive integer $x\leq g(\ell,p)$ and a partition of $V(G') = S \uplus S' \uplus Y$ satisfying:
\begin{enumerate}
	\item $|S\cup S'|= x$ and \( S' \) is the set of all vertices in graphs in \( \mathcal{C} \) that do not belong to an $f(\ell,p,x)$-large equivalence class of $\sim$.
	\item If there are no $f(\ell,p,x)$-large equivalence classes of $\sim$, $Y=\emptyset$ and $V(G')=S\uplus S' = V(G)$. Otherwise, it holds that $Y=L_1 \uplus \cdots \uplus L_{f(\ell,p,x)}$ where \( L_i \) 
	is an $f(\ell,p,x)$-large group for each $i\in\{1,\cdots,f(\ell,p,x)\}$, and each pair of $L_i$ and $L_j$, $i \neq j$, is vertex-disjoint.
\end{enumerate}
\end{definition}

Intuitively, the reduced graphs we will be dealing with will consist of $S$, a set $S'$ of all equivalence classes of $\sim$ which are too small to fully saturate our large groups (these will later be treated essentially in the same way as $S$), and a sufficient number of large groups; equivalence classes of size larger than $f(\ell,p,x)$ are ``pruned'' to have size exactly $f(\ell,p,x)$. A schematic overview of this intuition can be found in Figures~\ref{fig:ramsey-overview}b and~c later on.

The core of our result is the following lemma, which we prove separately in \Cref{subsec:VI-Kernel-proof}.
\begin{lemma}
\label{lem:VI-kernel-proof}
If $G'$ is a reduced graph of $G$ then $G$ has an $\ell$-page stack (queue) layout if and only if $G'$ has an $\ell$-page stack (queue) layout.%
\end{lemma}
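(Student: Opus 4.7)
The forward direction will be immediate from Definition~\ref{def:red-graph}: since $G'$ is an induced subgraph of $G$, any $\ell$-page stack (queue) layout of $G$ restricts, by keeping only vertices in $V(G')$ and edges in $E(G')$, to an $\ell$-page stack (queue) layout of $G'$. All the real work will lie in the backward direction, where, starting from an $\ell$-page layout \linL[G'] of $G'$, we plan to construct a corresponding layout of all of $G$ by reinserting the pruned twins.

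We will set $T := S \cup S'$, so $|T| = x \leq g(\ell,p)$. By Definition~\ref{def:red-graph}, the vertices of $V(G) \setminus V(G')$ are exactly the ``extra'' twin copies from the $f(\ell,p,x)$-large equivalence classes of $\sim$ that were pruned when building $G'$. The plan is to invoke a Ramsey-type argument to identify, for each such class, a well-structured subset of its representatives already present in $G'$ whose pattern in \linL[G'] is regular enough that the pruned copies can be spliced in next to them.

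Concretely, for each $f(\ell,p,x)$-large class $[H]$ with representatives $H_1,\dots,H_{f(\ell,p,x)}$ in $G'$, we will associate to each $H_i$ a \emph{signature} $\tau(H_i)$ encoding (i)~the position of each vertex of $V(H_i)$ in $\prec'$ relative to the (fixed) vertices of $T$, i.e., into which ``gap'' of $T$ it falls; (ii)~the internal spine order of $V(H_i)$ together with the page $\sigma'(e)$ of each internal edge $e$ of $H_i$; and (iii)~the page $\sigma'(e)$ of each edge $e$ with one endpoint in $V(H_i)$ and the other in $T$. A direct count shows that the number of distinct signatures for a single class is bounded by a function $N(\ell,p,x)$ comfortably dominated by our choice of $f(\ell,p,x)$. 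Iterated applications of the pigeonhole/Ramsey principle---first within each class to fix a monochromatic signature, then across all at most $p \cdot 2^{2p^2}$ large classes to ensure cross-class coherence (for instance, having the chosen monochromatic twins occupy ``consecutive blocks'' of $\prec'$ with identical edge-to-$T$ pages)---will then yield a guiding sub-layout built from Ramsey-uniform blocks.

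The insertion step will place every pruned twin from $\mathcal{C}$ immediately adjacent in the new spine order $\prec^*$ to one of the uniform representatives of its class, and copy that representative's signature verbatim for the new twin (same internal order, same internal page assignment, same page assignment of edges to $T$). What remains to verify is that $\sigma^*$ still avoids forbidden patterns on each page: (a)~edges lying wholly within the duplicated twins, or between them and $T$, simply repeat an already non-crossing (resp.\ non-nesting) block side-by-side, and such repetition preserves the non-crossing (resp.\ non-nesting) property; and (b)~the uniform block was already compatible with all other edges of \linL[G'], and inserting identical copies directly beside it inherits this compatibility. The main obstacle, and the reason we need the doubly-exponential strength of $f$, is carrying out (a) and (b) cleanly and simultaneously across all large equivalence classes while managing interactions both with edges incident to $T$ and with twins of other classes; the iterated Ramsey arguments over signatures, over page assignments of $T$-incident edges, and across classes are precisely what will make these verifications go through.
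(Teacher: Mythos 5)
The forward direction and the high-level strategy (find Ramsey-regular structure in $\linL[G']$, then splice the pruned twins in beside it) match the paper, but the proposal misses the crucial technical idea and the argument as written would not go through.

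Your \emph{signature} $\tau(H_i)$ is a per-twin label: it records where $H_i$'s vertices fall relative to $T = S\cup S'$, the internal order of $H_i$, and the pages of $H_i$'s edges. Finding many twins with the same $\tau$ is then just a pigeonhole step (a \emph{vertex} coloring). But this encodes nothing about how two twins of the same class are ordered \emph{relative to each other}. Two twins with identical signatures can still be interleaved, nested, or side-by-side inside the same gap of $T$, and these arrangements behave completely differently with respect to forbidden crossing/nesting patterns. Consequently, the ``consecutive blocks'' you then ask for in the cross-class coherence step does not follow from uniform signatures; there is nothing in the coloring to rule out, say, alternating interleavings of arbitrarily many same-signature twins. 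The paper avoids this by coloring \emph{pairs} of large groups (one representative per large class bundled together): the color of the pair $(X,Y)$ is $\textsf{info}_{\linL}(X,Y)$, the entire layout of $G[S\cup S'\cup X\cup Y]$ transferred to a canonical copy $G[S\cup S'\cup L\cup L']$. This is an edge coloring of a clique on the large groups, and Ramsey's theorem yields a monochromatic \emph{triangle} $X,Y,Z$. From equality of the three pairwise $\textsf{info}$'s one proves (Lemma~\ref{lem:structure}) that $X,Y,Z$ decompose into blocks appearing in strict ascending or descending order; three copies, not two, are essential to force this via a transitivity argument. Only then is it safe to replace each block by an arbitrarily long run of copies, one per twin to be reinserted.

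Two further remarks. First, coloring whole large groups rather than per-class twins is not merely cosmetic: it synchronizes the choice across \emph{all} large classes at once, so there is no separate cross-class coordination step to carry out. Second, your insertion plan places pruned twins ``immediately adjacent'' to uniform representatives, but without the ascending/descending block structure you have no principled way to decide the orientation (before vs.\ after) nor to argue that edges of the inserted copy to $S$ do not create a crossing/nesting with an already-placed sibling; the paper derives exactly this from $\zeta(B)\in\{\nearrow,\searrow\}$ in Lemma~\ref{lem:layout-supergraph}, constructing $\sigma^+$ by cloning the middle copy $Y$'s assignment for every inserted twin.
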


Before proceeding to that proof, we show how to construct a reduced graph having size bounded by a computable function of $p$ and $\ell$ in polynomial time.

\begin{restatable}\restateref{lem:reducedcomputation}{lemma}{lemmaReducedComputation}
\label{lem:reducedcomputation}
There exists a reduced graph $G'$ of $G$, and given $S$ and $\sim$ such a graph can be computed in polynomial time. Further, the number of vertices in $G'$ is upper-bounded by $\big(2\uparrow\uparrow (p\cdot 2^{2p^2})\cdot 2 + 6\big)^{\ell \cdot p}$.
\end{restatable}
\begin{prooflater}{plemmaReducedComputation}
We present an algorithm to compute a reduced graph $G'$ below:
\begin{enumerate}
	\item Initialize $X=S$ and $\mathcal{C}':=\mathcal{C}$
	\item As long as there exists an equivalence class $[H]$, $H\in \mathcal{C}'$ that is not $f(\ell,p,|X|)$-large:
	\begin{enumerate}
		\item Set $\mathcal{C}'=\mathcal{C}'\setminus [H]$ and $X=X\cup \{v:v\in H', H'\in [H]\}$
	\end{enumerate}
	\item If $\mathcal{C}'\neq \emptyset$, then set $O:=X\uplus L_1\uplus \cdots \uplus L_{f(\ell,p,|X|)}$ where each $L_{1\leq i\leq f(\ell,p,|X|)}$ is a $f(\ell,p,|X|)$-large group. Else set $O:=X$.
	\item Output $G'=G[O]$
\end{enumerate}

We first bound the size of $V(G')=O$. In each iteration of step~$2$, the vertices of all graphs in exactly one equivalence class $[H]$, $H\in \mathcal{C'}$ are added to the set $X$ and the graphs in $[H]$ are removed from $\mathcal{C'}$. Let $X_i$, $i\in \{1,\cdots,2^{2p^2}\}$ be the set $X$ at the end of the $i^{th}$ iteration of step $2$ and let there be $t$ iterations of step $2$. Note that $t=0$ if there is no successful iteration of step 2. Since there are at most $p\cdot 2^{2p^2}$ equivalence classes in $\sim$, the algorithm repeats step~2 at most $p\cdot 2^{2p^2}$ times, thus $t\leq p\cdot 2^{2p^2}$.

Let $X_0:=S$, observe that $|X_0|=|S|\leq p$. If $t\geq 1$, by construction and by the fact that each graph in $\mathcal{C}$ has at most $p$ vertices, for each $i\in \{1,\cdots,t\}$, $|X_i|\leq |X_{i-1}|+p\cdot f(\ell,p,|X_i|)$. Since $t\leq p\cdot 2^{2p^2}$, $|X_t|\leq g(\ell,p)\leq \big(2\uparrow\uparrow (p\cdot 2^{2p^2})\cdot 2 + 4\big)^{l\cdot p}$.

Further each $f(\ell,p,|X_t|)$-large group has at most $p\cdot2^{2p^2}$ vertices and thus $f(\ell,p,|X_t|)$ many $f(\ell,p,|X_t|)$-large groups have at most $p\cdot 2^{2p^2}\cdot f(\ell,p,|X_t|)$ vertices. Therefore we have $|O|\leq |X_t|+p\cdot 2^{2p^2}\cdot f(\ell,p,|X_t|)\leq g(\ell,p)+p\cdot 2^{2p^2}\cdot f(\ell,p,g(\ell,p)) \leq \big(2\uparrow\uparrow (p\cdot 2^{2p^2})\cdot 2 + 6\big)^{l\cdot p}$.

Next observe that when the algorithm stops all vertices in graphs in $\mathcal{C}$ that do not belong to an $f(\ell,p,|X_t|)$-large equivalence class belongs to $X_t$. Further $S\subseteq X_t$. For the $f(\ell,p,|X_t|)$-large equivalence classes, $f(\ell,p,|X_t|)$ many $f(\ell,p,|X_t|)$-large groups are added to $O$. Thus by construction, the induced graph $G'=G[O]$ output by the algorithm is a reduced subgraph by Definition~\ref{def:red-graph}. This completes the proof.
\end{prooflater}

Lemma~\ref{lem:reducedcomputation} combined with Lemma~\ref{lem:VI-kernel-proof}
establishes Lemma~\ref{lem:kernel} by constructing a reduced graph which is a kernel of the desired size.
Thus, what remains is to establish Lemma~\ref{lem:VI-kernel-proof}; this is where the core ideas of Ramsey pruning will come into play.

\subsection{Proof of Lemma~\ref{lem:VI-kernel-proof}}
\label{subsec:VI-Kernel-proof}
For this proof, we fix $G'$ to be a reduced graph of $G$.
Since $G'$ is an induced subgraph of $G$, it is clear that if $G$ has an $\ell$-page stack (queue) layout then $G'$ also has an $\ell$-page stack (queue) layout. To complete the proof we show that the reverse is true: If $G'$ has an $\ell$-page stack (queue) layout, then $G$ also has an $\ell$-page stack (queue) layout.

Recall that $f(\ell,p,x):=2^{2^{\ell\cdot x^2\cdot 2^{12p^2}}}$. Let $\linL$ be a fixed (but hypothetical) $\ell$-page stack (queue) layout of $G'$. %
Throughout this section, for any subgraph $H\subseteq G'$, we denote by $\linL[H]$ the specific layout of $H$ obtained by restricting $\linL$ to $H$.
If $V(G')=V(G)$ we are done. So let $V(G') = S \uplus S' \uplus L_1 \uplus \cdots \uplus L_{f(\ell, p, |S\cup S'|)}$ be a partition witnessing that $G'$ is a reduced graph. Here each $L_i$ is a $f(\ell, p, |S\cup S'|)$-large group and $S'$ is the set of all vertices in graphs that do not belong to an $f(\ell, p, |S\cup S'|)$-large equivalence class in $\sim$. For brevity, we will hereinafter use \emph{large} as shorthand for $f(\ell, p, |S\cup S'|)$-large. 

Let $\mathcal{L}=\{L_1,\cdots,L_{f(\ell,p,|S\cup S'|)}\}$. Our key idea is to identify three large groups $X,Y,Z\in \mathcal{L}$ with a special structure (a ``guiding sublayout'') in the solution $\linL$ using Ramsey theory. We will then use this pattern to insert the remaining parts of the graph. Before proceeding, we define some notations to be able to identify these groups, starting with a ``template'' $R$. 

\begin{definition}
Let $k$ be the number of large equivalence classes in $\sim$ and let $R_i\in \mathcal{C}$
be a canonical representative of the $i^{th}$ large  equivalence class $[R_i]$. Let $R:=V(R_1)\uplus \cdots \uplus V(R_k)$.
\end{definition}

Recall that by Definition~\ref{defn:large-eq-class}, if $X$ is a large group then $G[X]$ is a disjoint union of exactly one graph from each large equivalence class of $\sim$. %
We now define a natural isomorphism from $G[X]$ to $G[R]$. This will allow us to map consistently between different large groups via $R$.
\begin{definition}
For a large group $X$ with $G[X]=H_1\uplus \cdots \uplus H_k$, $H_i\in [R_i]$ for each $i\in [k]$, let $\alpha_X$ be an isomorphism from $G[X]$ to $G[R]=R_1\uplus \cdots \uplus R_k$ such that for each $i\in [k]$ and vertex $u\in H_i$, $\alpha_X(u)\in V(R_i)$, and for each $v\in S$, $uv\in E(G)$ if and only if $\alpha_X(u)v\in E(G)$.   %
\end{definition}

For any large group $A$ and for each $u\in R$ we will sometimes use $u_A$ as shorthand for $\alpha_A^{-1}(u)$. \oldtodo{R: I think this should be made consistent: right now we use both options from the previous sentence. I'd be fine with just keeping, e.g., the $-1$ notation.\\ TD: 2025-08-19: decided to keep for now both versions}
From now we fix two distinct large groups $L$ and $L'$ in $\mathcal{L}$. Further, for distinct large groups $X,Y\in \mathcal{L}$, we say $X\prec Y$ if in $\prec$, the first vertex in $X\cup Y$ is from $X$.
\begin{definition}
For $X\prec Y\in \mathcal{L}\setminus \{L,L'\}$, we define $\phi_{X,Y}:S\cup S'\cup X\cup Y \rightarrow S\cup S'\cup L\cup L'$
\begin{itemize}
	\item $\phi_{X,Y}(s)=s$ for each $s\in S\cup S'$
	\item $\phi_{X,Y}(x)=\alpha^{-1}_{L}(\alpha_X(x))$ for each $x\in X$
	\item $\phi_{X,Y}(y)=\alpha^{-1}_{L'}(\alpha_Y(y))$ for each $y\in Y$
\end{itemize}
Note that $\phi_{X,Y}$ is an isomorphism from $G[S\cup S'\cup X\cup Y]$ to $G[S\cup S'\cup L\cup L']$.
\end{definition}
\begin{definition}
For $X\prec Y\in \mathcal{L}\setminus\{L,L'\}$, 
$\textsf{info}_{\linL}(X, Y)$ is the stack (queue) layout of $G[S\cup S'\cup L\cup L']$ obtained from $\linL[G[S\cup S'\cup X\cup Y]]$ %
using the isomorphism $\phi_{X,Y}$.%
\end{definition}

We are now ready to show the existence of three distinct large groups $X,Y,Z$ with a useful consistent pattern between them and $S\cup S'$ based on $\textsf{info}$. Essentially, these three groups will form the aforementioned guiding sublayout used to argue the correctness of the pruning step, i.e., establish Lemma~\ref{lem:VI-kernel-proof} by showing that we can reinsert all the removed vertices by building on $\linL[G[S\cup S' \cup X \cup Y \cup Z]]$. Note that---perhaps counterintuitively---we will do so by discarding all other information about $\linL$. We refer to Figure~\ref{fig:ramsey-overview} for an overview of the entire approach and to \Cref{fig:ramsey-blocks} for a visualization of $\textsf{info}$.

\begin{figure}
\begin{center}
	\includegraphics[page=1]{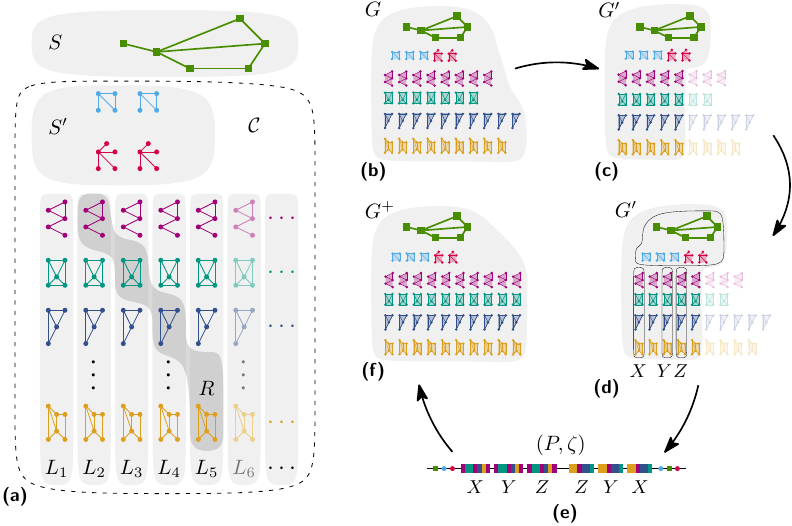}
\end{center}
\caption{A schematized overview of~\textbf{\textsf{(a)}} the used notation and \textbf{\textsf{(b)}--\textsf{(f)}} proof of \Cref{thm:vifpt}. In the latter, $G$ is the original input graph, $G'$ the reduced graph obtained after applying Lemma~\ref{lem:reducedcomputation}, and $S\cup S' \cup X \cup Y \cup Z$ form the guiding sublayout that is used to construct an $\ell$-page layout for a graph $G^+$ which is a supergraph of $G$---hence establishing that $G'$ admits an $\ell$-page stack (or queue) layout if and only if so does $G$.}
\label{fig:ramsey-overview}
\end{figure}

\begin{lemma}
\label{lem:threelarge}
There exists distinct large groups $X,Y,Z \in \mathcal{L} \setminus \{L,L'\} $ with $X\prec Y\prec Z$ such that $\textsf{info}_{\linL}(X, Y)=\textsf{info}_{\linL}(Y, Z)=\textsf{info}_{\linL}(X, Z)$. %
\end{lemma}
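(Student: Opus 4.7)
The plan is to interpret $\textsf{info}_{\linL}$ as a multi-coloring of the $2$-subsets of the large groups in $\mathcal{L} \setminus \{L, L'\}$ and apply the classical multi-color Ramsey theorem to extract a monochromatic triangle. Concretely, to each pair of distinct large groups $X, Y \in \mathcal{L} \setminus \{L, L'\}$ I would assign the color $\textsf{info}_{\linL}(X, Y)$, using the canonical convention $X \prec Y$; recall that each such color is an $\ell$-page stack (queue) layout of the single fixed graph $G[S \cup S' \cup L \cup L']$, so the palette is the same for every pair.

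The first step is to bound the number of colors. Let $N := |S \cup S'| + |L| + |L'|$. By \Cref{obs:size-equiv-class} each large group has at most $p \cdot 2^{2p^2}$ vertices, so $N \leq x + 2p \cdot 2^{2p^2}$, where $x := |S \cup S'|$. Any $\ell$-page stack or queue layout of $G[S \cup S' \cup L \cup L']$ is determined by its spine order (one of at most $N!$ permutations) together with its page assignment (one of at most $\ell^{\binom{N}{2}}$ functions). Consequently, the number of colors is at most $c(\ell, p, x) := N! \cdot \ell^{\binom{N}{2}}$, a computable function of $\ell$, $p$, and $x$.

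The second step is the Ramsey argument: by the classical multi-color Ramsey theorem (e.g., using the bound $R_c(3) \leq \lfloor e \cdot c! \rfloor + 1 \leq c^{c+1}$), any $c$-colored complete graph on at least $R_c(3)$ vertices contains a monochromatic triangle. Applied in our setting, this yields three distinct large groups which, after renaming according to $\prec$, satisfy the desired equality $\textsf{info}_{\linL}(X,Y) = \textsf{info}_{\linL}(Y,Z) = \textsf{info}_{\linL}(X,Z)$, provided that $|\mathcal{L} \setminus \{L, L'\}| = f(\ell, p, x) - 2 \geq R_{c(\ell, p, x)}(3)$.

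The main obstacle, and the only genuinely computational step, is verifying that the choice $f(\ell, p, x) = 2^{2^{\ell \cdot x^2 \cdot 2^{12 p^2}}}$ is indeed large enough. Iterating logarithms, one obtains $\log \log R_{c(\ell,p,x)}(3) \leq \mathcal{O}(\log c(\ell, p, x)) = \mathcal{O}(N^2 \log(N\ell))$, and plugging in $N \leq x + 2p \cdot 2^{2p^2}$ yields a bound of the form $x^2 \cdot 2^{\mathcal{O}(p^2)} \cdot \log(\ell \cdot x \cdot p)$. Since $\log \log f(\ell, p, x) = \ell \cdot x^2 \cdot 2^{12 p^2}$, the comfortable slack in the exponent $2^{12 p^2}$ absorbs the polynomial and logarithmic factors with room to spare, so the Ramsey bound is satisfied and the lemma follows immediately.
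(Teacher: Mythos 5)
Your proof is correct and mirrors the paper's argument: both define an edge-coloring of the complete graph on $\mathcal{L} \setminus \{L, L'\}$ by $\textsf{info}_{\linL}$, bound the number of colors by counting $\ell$-page layouts of $G[S \cup S' \cup L \cup L']$, and invoke Ramsey theory to extract a monochromatic triangle (the paper uses the $\log_t(n)/t$ lower bound on monochromatic-clique size rather than your $R_c(3) \leq \lfloor e\cdot c!\rfloor + 1$ bound, but the calculations are otherwise parallel). One small slip: each large group contains one graph of at most $p$ vertices from each of at most $p \cdot 2^{2p^2}$ large equivalence classes, so $|L|, |L'| \leq p^2 \cdot 2^{2p^2}$ rather than $p\cdot 2^{2p^2}$; the slack in $f$ comfortably absorbs the difference.
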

\begin{proof}

Construct a complete edge colored auxiliary graph $\mathcal{H}$ %
with $V(\mathcal{H})=\mathcal{L} \setminus \{L,L'\} $. For $X,Y\in \mathcal{L} \setminus \{L,L'\} $ with $X\prec Y$, we set $color((X,Y))=\mathsf{info}_{\linL}(X,Y)$. Let $x:=|S\cup S'|$.

\begin{restatable}\restateref{claim:edge-colors}{claim}{claimRamseyEdgeColors}
	\label{claim:edge-colors}
	\oldtodo{check and fix values\\ TD: still relevant?}
	The number of possible distinct edge colors of $\mathcal{H}$ is at most $2^{\ell\cdot x^2\cdot 2^{9p^2}}$.
\end{restatable}
\begin{statelater}{pclaimRamseyEdgeColors}
	\begin{claimproof}
		Recall that \linL is an $\ell$-page stack (queue) layout of $G'$. By definition, $\mathsf{info}_{\linL}(X,Y)$ for $X\prec Y\in \mathcal{L}\setminus \{L,L'\}$ is a stack (queue) layout $\langle \prec',\sigma'\rangle$ of $G[S\cup S'\cup L\cup L']$ using at most $\ell$-pages. 
		The number of possible $\ell$ page layouts of $G[S\cup S'\cup L\cup L']$ is at most $(|S\cup S'\cup L\cup L'|)!\cdot \ell^{|S\cup S'\cup L\cup L'|^2}$. There are 
		at most $p\cdot 2^{2p^2}$ equivalence classes and each graph in any equivalence class has size at most $p$. Thus we can bound the size of large groups $|L|=|L'|\leq p^2\cdot 2^{2p^2}$. Then using $|S\cup S'|=x$
		and $|L|+|L'|\leq 2p^2\cdot 2^{2p^2}$, we bound the number of possible $\ell$ page layouts of $G[S\cup S'\cup L\cup L']$ by $2^{\ell\cdot x^2\cdot 2^{9p^2}}$. 
	\end{claimproof} 
\end{statelater}

We now use a well-known fact (from Ramsey theory~\cite{Alon_asymptoticallytight}) that any edge-colored clique on $n$ vertices colored with $t$ colors has a monochromatic clique of size at least $\log_t(n)/t$. 

We have $|V(\mathcal{H})|=|\mathcal{L}|-2= f(\ell,p,x)-2\geq 2^{2^{\ell\cdot x^2\cdot 2^{11p^2}}}$. Thus $n\geq 2^{2^{\ell\cdot x^2\cdot 2^{11p^2}}}$ and $t\leq 2^{\ell\cdot x^2\cdot 2^{9p^2}}$ and therefore there is a monochromatic clique of size at least $\log_t(n)/t\geq 2^{2^2}\geq 3$.
A monochromatic clique of size $3$ in $\mathcal{H}$ corresponds to distinct large groups $X,Y,Z\in \mathcal{L}\setminus \{L,L'\}$ with $X\prec Y\prec Z$ such that $\textsf{info}_{\linL}(X, Y)=\textsf{info}_{\linL}(Y, Z)=\textsf{info}_{\linL}(X, Z)$. 
\end{proof}
Let $X,Y,Z\in \mathcal{L}$ be three large groups witnessing the previous lemma. We now show that we can partition the vertices of $X$, $Y$, and $Z$ in a nice way that will guide us on how to insert the remaining parts of the graph. 
\onlyShort{
Towards this, we first note that the $\textsf{info}_{\linL}$ forces (1) edges incident to $X$, $Y$ and $Z$ to have the same page assignment, and moreover (2) forces the individual vertex orderings of $X\cup S\cup S'$, $Y\cup S\cup S'$, $Z\cup S\cup S'$ to be exactly the same as 
some ordering $\prec'$ of $R\cup S\cup S'$; see also  \ifthenelse{\boolean{cameraready}}{the full version~\cite[Observations~18 and 19]{ARXIV}}{Observations~\ref{obs:pagecolor} and~\ref{obs:internalorder} in the appendix}. However, this does not yet fix how the vertices of $X$, $Y$ and $Z$ are ordered with respect to each other.
}

\begin{statelater}{viBlockObservations}
We formally observe that the isomorphic copies of the edges incident to $X$, $Y$ and $Z$ have the same page assignment in $\linL$ because $\textsf{info}_{\linL}(X, Y)=\textsf{info}_{\linL}(Y, Z)=\textsf{info}_{\linL}(X, Z)$. Recall for $u\in R$ and $A\in \mathcal{L}$, $u_A=\alpha^{-1}_A(u)$.
\begin{observation}
	\label{obs:pagecolor}
	For each $u\in R$:        
	\begin{itemize}
		\item For each $v\in R$ such that $(u,v)\in E(G)$ we have $\sigma(u_X,v_X)=\sigma(u_Y,v_Y)=\sigma(u_Z,v_Z)$;
		\item For each $s\in S$ such that $(u,s)\in E(G)$ we have $\sigma(u_X,s)=\sigma(u_Y,s)=\sigma(u_Z,s)$.
	\end{itemize} 
\end{observation}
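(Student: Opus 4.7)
The plan is to prove both bullets by a direct definitional unfolding that exploits the equality $\textsf{info}_{\linL}(X, Y) = \textsf{info}_{\linL}(Y, Z) = \textsf{info}_{\linL}(X, Z)$ guaranteed by Lemma~\ref{lem:threelarge}. I would first recall that $\textsf{info}_{\linL}(A, B)$ is the layout of $G[S \cup S' \cup L \cup L']$ obtained by transporting the restriction of $\linL$ to $G[S \cup S' \cup A \cup B]$ along $\phi_{A, B}$, and state the immediate consequence: an edge $e$ of $G[S \cup S' \cup L \cup L']$ receives the same page in $\textsf{info}_{\linL}(A, B)$ as its preimage $\phi_{A,B}^{-1}(e)$ receives in $\linL$. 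Since $\phi_{A,B}$ fixes $S \cup S'$ pointwise, sends each $a \in A$ to $u_L$ where $u = \alpha_A(a)$, and sends each $b \in B$ to $u_{L'}$ where $u = \alpha_B(b)$, these preimages can be read off mechanically from the sides on which the two groups sit.

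For the second bullet, I would fix $u \in R$ and $s \in S$ with $(u, s) \in E(G)$ and chase the edge $(u_L, s)$ backwards through each of the three isomorphisms. Its preimages under $\phi_{X,Y}$, $\phi_{X,Z}$, and $\phi_{Y,Z}$ are $(u_X, s)$, $(u_X, s)$, and $(u_Y, s)$, respectively---the asymmetry being that in $\phi_{Y, Z}$ it is $Y$ (rather than $X$) that plays the role of the earlier group mapped into $L$. Combined with the three layout equalities, this forces $\sigma(u_X, s) = \sigma(u_Y, s)$. Symmetrically, chasing $(u_{L'}, s)$ backwards through the same three isomorphisms yields preimages $(u_Y, s)$, $(u_Z, s)$, $(u_Z, s)$, hence $\sigma(u_Y, s) = \sigma(u_Z, s)$, and the bullet follows by transitivity.

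For the first bullet, I would note that since the components of $G - S$ are pairwise vertex-disjoint, an edge $(u, v) \in E(G)$ with $u, v \in R$ must sit inside a single representative component $R_i$, so the image edges $(u_L, v_L)$ and $(u_{L'}, v_{L'})$ lie entirely inside $L$ and inside $L'$ respectively and never straddle the two halves. Unfolding exactly as above, $(u_L, v_L)$ has preimage $(u_A, v_A)$ under $\phi_{A, B}$ whenever $A$ is the earlier group of the pair, and $(u_{L'}, v_{L'})$ has preimage $(u_B, v_B)$; running this through the three pairs $(X,Y), (X,Z), (Y,Z)$ and combining with the layout equalities gives $\sigma(u_X, v_X) = \sigma(u_Y, v_Y) = \sigma(u_Z, v_Z)$. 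The main obstacle here is purely notational: with three overlapping isomorphisms that all factor through the canonical identifications $\alpha_A$ but pair the two large groups to $\{L, L'\}$ differently, it is easy to mislabel which preimage witnesses which equality. I would therefore fix a single ``who-goes-to-which-side'' table at the outset and apply it mechanically to both bullets.
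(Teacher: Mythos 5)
Your proposal is correct and follows the same route the paper takes: the paper simply asserts Observation~\ref{obs:pagecolor} as an immediate consequence of $\textsf{info}_{\linL}(X,Y)=\textsf{info}_{\linL}(Y,Z)=\textsf{info}_{\linL}(X,Z)$ without spelling out the diagram chase, and your write-up is exactly that chase made explicit, including the two key bookkeeping points (which of the three isomorphisms sends which group to $L$ versus $L'$, and the fact that an edge with both endpoints in $R$ stays within one component and hence never straddles $L$ and $L'$).
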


Similarly, $\prec_{X\cup S\cup S'}$, $\prec_{Y\cup S\cup S'}$ and $\prec_{Z\cup S\cup S'}$ must behave in the same way when viewed as isomorphic copies with respect to $R$ due to definition of $\textsf{info}_{\linL}$; we now formalize this.

\begin{observation}
	\label{obs:internalorder}
	There is an ordering $\prec'$ of vertices of $R\cup S\cup S'$ such that:
	\begin{itemize}
		\item For each $s\in S\cup S'$ and $r\in R$, if $s\prec' r$ then $s\prec r_X$, $s\prec r_Y$, and $s\prec r_Z$;
		\item For each $s\in S\cup S'$ and $r\in R$, if $r\prec' s$ then $ r_X\prec s$, $r_Y\prec s$, and $r_Z\prec s$;
		\item For each $u,v\in S\cup S'$, if $u\prec' v$ then $u\prec v$;
		\item For each $u,v\in R$ if $u\prec'v$ then $u_X\prec v_X$, $u_Y\prec v_Y$ and $u_Z\prec v_Z$.
	\end{itemize} %
\end{observation}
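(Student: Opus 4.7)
The plan is to define $\prec'$ using one of the three large groups---say $X$---and then invoke the three-way equality $\textsf{info}_{\linL}(X,Y)=\textsf{info}_{\linL}(Y,Z)=\textsf{info}_{\linL}(X,Z)$ to argue that the same relations carry over to $Y$ and $Z$. Concretely, extend the shorthand $r_X=\alpha_X^{-1}(r)$ for $r\in R$ by $s_X=s$ for $s\in S\cup S'$, and declare $u\prec' v$ iff $u_X\prec v_X$ in $\linL$. This makes $\prec'$ a total order on $R\cup S\cup S'$, immediately yields the third item (since $\prec'$ agrees with $\prec$ on $S\cup S'$), and yields every $X$-part of the remaining items directly by construction. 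It thus only remains to transfer those relations from $X$ to $Y$ and to $Z$.

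The key observation for this transfer is that, by definition, $\textsf{info}_{\linL}(X,Y)$ is obtained from $\linL$ restricted to $G[S\cup S'\cup X\cup Y]$ through the isomorphism $\phi_{X,Y}$, which fixes $S\cup S'$, sends the left group $X$ to $L$ via $\alpha_L^{-1}\circ\alpha_X$, and sends the right group $Y$ to $L'$ via $\alpha_{L'}^{-1}\circ\alpha_Y$. Hence the restriction of $\textsf{info}_{\linL}(X,Y)$ to $S\cup S'\cup L$ encodes precisely the $\linL$-order on $S\cup S'\cup X$ (pulled through $\alpha_X$), while its restriction to $S\cup S'\cup L'$ encodes the $\linL$-order on $S\cup S'\cup Y$ (pulled through $\alpha_Y$). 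Analogous statements hold for $\textsf{info}_{\linL}(Y,Z)$ (with $Y\to L$ and $Z\to L'$) and for $\textsf{info}_{\linL}(X,Z)$ (with $X\to L$ and $Z\to L'$).

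Plugging these observations into $\textsf{info}_{\linL}(X,Y)=\textsf{info}_{\linL}(Y,Z)$ and restricting to $S\cup S'\cup L$ shows that the $\linL$-orderings on $S\cup S'\cup X$ and on $S\cup S'\cup Y$ coincide after identifying $X$ and $Y$ via their respective $\alpha$-maps. Repeating the argument with $\textsf{info}_{\linL}(X,Y)=\textsf{info}_{\linL}(X,Z)$ and restricting to $S\cup S'\cup L'$ identifies the $Y$- and $Z$-orderings in the same way. Chaining the two identifications proves $s\prec r_X \Leftrightarrow s\prec r_Y \Leftrightarrow s\prec r_Z$ for all $s\in S\cup S'$, $r\in R$, and $u_X\prec v_X \Leftrightarrow u_Y\prec v_Y \Leftrightarrow u_Z\prec v_Z$ for all $u,v\in R$, which delivers the remaining statements of the first, second, and fourth items.

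The main obstacle is purely bookkeeping: one has to track carefully which side (left or right) of each $\textsf{info}$ layout carries the information about which large group under the corresponding $\phi$-map, and be consistent when invoking the three equalities---in particular, the identification between $X$ and $Y$ comes from the left halves of $\textsf{info}_{\linL}(X,Y)$ and $\textsf{info}_{\linL}(Y,Z)$, whereas the identification between $Y$ and $Z$ is read off from the right halves of $\textsf{info}_{\linL}(X,Y)$ and $\textsf{info}_{\linL}(X,Z)$. Once this accounting is in place, no further combinatorial argument is required, as the observation follows purely by transporting orderings through the canonical isomorphisms $\alpha_X$, $\alpha_Y$, $\alpha_Z$.
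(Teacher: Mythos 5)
Your proof is correct. The paper states Observation~\ref{obs:internalorder} without a formal proof, treating it as a direct unwinding of the definition of $\textsf{info}_{\linL}$ together with the three-way equality from Lemma~\ref{lem:threelarge}; your plan carries out exactly that unwinding, defining $\prec'$ via $X$ and transporting the order relations to $Y$ and $Z$ by restricting the equalities $\textsf{info}_{\linL}(X,Y)=\textsf{info}_{\linL}(Y,Z)$ to the $L$-side and $\textsf{info}_{\linL}(X,Y)=\textsf{info}_{\linL}(X,Z)$ to the $L'$-side, which is precisely the intended argument.
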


Note that Observation~\ref{obs:internalorder} does not provide any information about how the vertices of $X$, $Y$ and $Z$ are ordered with respect to each other. Obtaining some structure in that regard will be our next, major task.
\end{statelater}

For the following, let us set $\Upsilon=S\cup S' \cup X \cup Y \cup Z$. 
Let a \emph{block} be a consecutive subsequence of vertices from $R$ w.r.t.\ $\prec'$, and a \emph{solution-block} of $\linL[G[\Upsilon]]$ be a consecutive subsequence of vertices from $\Upsilon$; see also \Cref{fig:ramsey-blocks-example}.
\begin{figure}[ht]
\begin{center}
	\includegraphics[page=1]{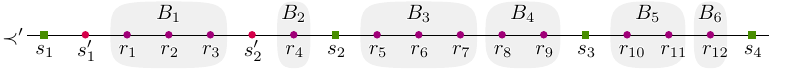}
\end{center}
\caption{An ordering $\prec'$ of $R \cup S \cup S'$ and a partition of $R$ into the blocks $B_1$ to $B_6$, indicated in gray.
	Note that vertices of $S \cup S'$, colored green and red, respectively, always separate two blocks. However, not every pair of blocks is separated by such a vertex, see, for example, blocks $B_3$ and $B_4$.}
\label{fig:ramsey-blocks-example}
\end{figure}

\newcommand{\asc}{\ensuremath{\nearrow}}
\newcommand{\desc}{\ensuremath{\searrow}}

We are now ready to prove the structural result which provides strong guardrails on $\linL[G[\Upsilon]]$; in particular, it guarantees that all vertices from $X$, $Y$ and $Z$ must occur in consecutive solution-blocks and following a strict ``ascending'' or ``descending'' order. We remark that this structural result is only made possible by the extraction of three copies of large groups via Lemma~\ref{lem:threelarge}; see also 
\Cref{fig:ramsey-blocks} for an illustration. 

\begin{restatable}\restateref{lem:structure}{lemma}{lemmaStructure}
\label{lem:structure}
There exists a pair $(P,\zeta)$, where $P$ is a partitioning of $R$ into blocks and $\zeta\colon P\rightarrow\{\asc, \desc\}$ satisfying the following.
For every block $B\in P$ with $\zeta(B) = \asc$, the sequence $\alpha^{-1}_X(B)\alpha^{-1}_Y(B)\alpha^{-1}_Z(B)$ is a solution-block of $\linL[G[\Upsilon]]$.
Moreover, for every block $B\in P$ with $\zeta(B) = \desc$, the sequence $\alpha^{-1}_Z(B)\alpha^{-1}_Y(B)\alpha^{-1}_X(B)$ is a solution-block of $\linL[G[\Upsilon]]$.
\end{restatable}
\begin{prooflater}{plemmaStructure}
By definition of $X,Y,Z$, we know $\textsf{info}_{\linL}(X,Y)=\textsf{info}_{\linL}(Y,Z)=\textsf{info}_{\linL}(X,Z)$. Using this relationship and the definition of $\textsf{info}$,  we have the following direct properties, where we recall that for a large group $A$ and $u \in R$, we use $u_A$ as a shorthand for $\alpha_A^{-1}(u)$:

\begin{itemize}
	\item For each $u\in R$, either $u_X\prec u_Y\prec u_Z$ or $u_Z\prec u_Y\prec u_X$.
	\item For any block $B$ of vertices in R, if all vertices in $\alpha^{-1}_{X}(B)$ occur before all vertices in $\alpha^{-1}_{Z}(B)$ then all vertices in $\alpha^{-1}_{Y}(B)$ also occur before all vertices in $\alpha^{-1}_{Z}(B)$.
	\item For any block $B$ of vertices in R, if all vertices in $\alpha^{-1}_{Z}(B)$ occur before all vertices in $\alpha^{-1}_{X}(B)$ then all vertices in $\alpha^{-1}_{Y}(B)$ also occur before all vertices in $\alpha^{-1}_{X}(B)$.
\end{itemize}

On a high level we traverse the linear order $\prec$ of $G'$ from left to right. We find the first vertex $u \in X\cup Y \cup Z$. If $u$ belongs to $X$, we then traverse $\prec$ and build a solution block $\alpha^{-1}_X(B)$ containing only $X$ vertices. This corresponds to block $B$ of $R$. We show below that $\alpha^{-1}_X(B)\alpha^{-1}_Y(B)\alpha^{-1}_Z(B)$ is a solution block. We then add $B$ to $P$ and set $\zeta(B)=\asc$ and continue the traversal. If instead $u$ was a vertex in $Z$ then we would have found a descending block. We now formalize this argument.

Let $R':=R$ and let $u$ be the first vertex in $R'$ in $\prec'$. Further let $\alpha^{-1}_X(u)\prec \alpha^{-1}_Y(u) \prec \alpha^{-1}_Z(u)$. 
Choose $B=u\prec'\cdots \prec'w$ to be the largest block of $R$ starting with $u$ such that $\alpha^{-1}_X(B)$ is a solution block, and let $p$ be the direct successor of $w$ according to $\prec'$ in $R$. That is we traverse $\prec$ starting at $\alpha^{-1}_X(u)$ until we find a vertex not in $X$. Note $u$ could be the same as $w$. We now show that $\alpha^{-1}_X(B)\alpha^{-1}_Y(B)\alpha^{-1}_Z(B)$ forms a solution block.

First we can infer that all vertices in $\alpha^{-1}_{X}(B)$ occur before all vertices in $\alpha^{-1}_{Z}(B)$. This implies all vertices in $\alpha^{-1}_{Y}(B)$ also occur before all vertices in $\alpha^{-1}_{Z}(B)$. Therefore $\alpha^{-1}_X(B)\alpha^{-1}_Y(u)$ is also a solution block. Next we show that $w_Z\prec p_X$.  

We know by the properties we observed earlier that either $p_Z\prec p_Y\prec p_X$ or $p_X\prec p_Y\prec p_Z$. If it is the former then $w_Z\prec p_Z\prec p_X$ as desired. If it is the latter, then $p_X$ is somewhere between $u_Y$ and $w_Z$ with $u_Y\prec w_Y\prec u_Z \prec w_Z$.
We can divide this into two cases -- $u_Y\prec p_X\prec u_Z$ or $w_Y\prec p_X\prec w_Z$ -- but this would in either case imply that $\textsf{info}(X,Y)=\textsf{info}(Y,Z)=\textsf{info}(X,Z)$ is not true. This shows that $\alpha^{-1}_X(B)\alpha^{-1}_Y(B)\alpha^{-1}_Z(B)$ is a solution block. We add $B$ to $P$ and set $\zeta(B)=\asc$. 

If we were in the case where $\alpha^{-1}_Z(u)\prec \alpha^{-1}_Y(u) \prec \alpha^{-1}_X(u)$, similar arguments will hold with $\alpha^{-1}_Z(B)\alpha^{-1}_Y(B)\alpha^{-1}_X(B)$ being a solution block and we set $\zeta(B)=\desc$. 
Finally recursing with $R'=R'\setminus B$ completes the proof.
\end{prooflater}

Crucially, we can now use Lemma~\ref{lem:structure} to safely insert an arbitrary number of large groups into $G[\Upsilon]$ without increasing the number of required pages. 

\begin{figure}
\begin{center}
	\includegraphics[page=1]{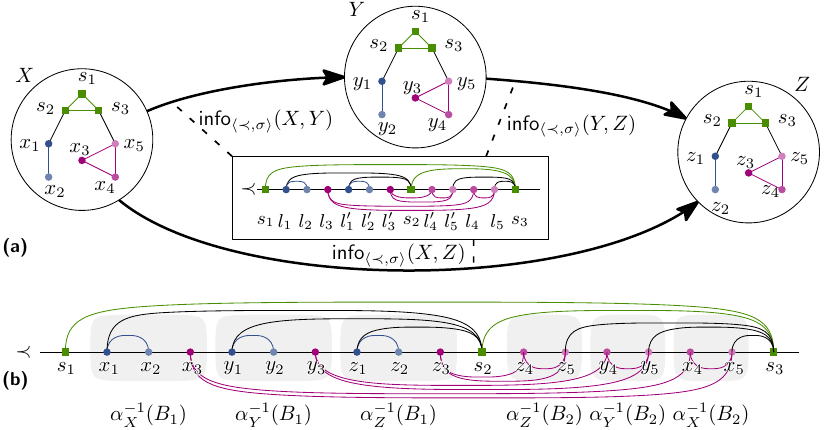}
\end{center}
\caption{\textbf{\textsf{(a)}} The three large groups $X \prec Y \prec Z \in \mathcal{L} \setminus \{L, L'\}$ form a monochromatic clique in the auxiliary graph $\mathcal H$, i.e., we have $\textsf{info}(X,Y)=\textsf{info}(Y,Z)=\textsf{info}(X,Z)$. 
	The tuple $(P, \zeta)$, where $P$ partitions $R$ into two blocks $B_1 = \{r_1, r_2, r_3\}$ and $B_2 = \{r_4, r_5\}$ with $\zeta(B_1) = \asc$ and $\zeta(B_2) = \desc$ fulfills the properties from \Cref{lem:structure}.
	As $X, Y, Z$ form a monochromatic clique, the solution blocks from~\textbf{\textsf{(b)}} must occur in $\linL[G[\Upsilon]]$.}
\label{fig:ramsey-blocks}
\end{figure}

\begin{restatable}\restateref{lem:layout-supergraph}{lemma}{lemmaLayoutSupergraph}
\label{lem:layout-supergraph}
Let $G^+$ be the supergraph of $G$ with each large equivalence class having equal size, then $G^+$ has an $\ell$-page stack (queue) layout. 
\end{restatable}
\begin{proofsketch}
Let $t\geq 3$ be the maximum size of a large equivalence class of $\sim$. %
Note that $G^+$ can be expressed as $G^+=G^+[S\cup S'\cup X_1\cup  \cdots \cup X_t]$ where each $X_i,i\in[t]$ is a large group. 
We construct a stack (or queue, as desired) layout $\linLs$ of $G^+$ as follows.

\proofsubparagraph*{Construction of $\sigma^+$:} For each edge $(u,v)\in G^+$,
\begin{itemize}
	\item If $u,v\in S\cup S'$, we set $\sigma^+(u,v):=\sigma(u,v)$.
	\item If $u\in X_i$ for some $i\in [t]$ and $v\in S$, we set $\sigma^+(u,v):=\sigma(r_Y,v)$ where $r=\alpha_{X_i}(u)$.
	\item If $u,v\in X_i$ for some $i\in [t]$, we set $\sigma^+(u,v):=\sigma(r_Y,r'_Y)$ where $r=\alpha_{X_i}(u)$, $r'=\alpha_{X_i}(v)$.
\end{itemize}
\proofsubparagraph*{Construction of $\prec^+$:} 
In $\prec'$, replace each block $B\in P$ with $\alpha_{X_1}^{-1}(B)\alpha_{X_2}^{-1}(B)\cdots\alpha_{X_t}^{-1}(B)$ if $\zeta(B)= \asc $ and with $\alpha_{X_t}^{-1}(B)\alpha_{X_{t-1}}^{-1}(B)\cdots\alpha_{X_1}^{-1}(B)$ if $\zeta(B)= \desc $.

\medskip
To show that $\linLs$ is indeed an $\ell$-page stack (queue) layout of $G^+$, we show that any conflict in $\linLs$ would also imply a conflict in $\linL[G[\Upsilon]]$ in the \ifthenelse{\boolean{cameraready}}{full proof~\cite{ARXIV}}{\hyperref[lem:layout-supergraph*]{full proof}}.
\end{proofsketch}
\begin{prooflater}{proofLayoutSupergraph}
Let $t\geq 3$ be the maximum size of a large equivalence class of $\sim$. %
Hence, $G^+$ can be expressed as $G^+=G^+[S\cup S'\cup X_1\cup  \cdots \cup X_t]$ where each $X_i,i\in[t]$ is a large group.  

Let $\linLs$ be the stack (queue) layout of $G^+$ that we will construct. %
Recall that there are no edges across large groups or from large groups to $S'$. Further $\prec'$ is the ordering of $R\cup S\cup S'$ from Observation~\ref{obs:internalorder}. Also let $(P,\zeta)$ be a pair witnessing Lemma~\ref{lem:structure} where $P$ is a partitioning of $R$ into blocks and $\zeta\colon P\rightarrow\{\asc, \desc\}$. %
\proofsubparagraph*{Construction of $\sigma^+$:} For each edge $(u,v)\in G^+$,
\begin{itemize}
	\item If $u,v\in S\cup S'$, we set $\sigma^+(u,v):=\sigma(u,v)$.
	\item If $u\in X_i$ for some $i\in [t]$ and $v\in S$, we set $\sigma^+(u,v):=\sigma(r_Y,v)$ where $r=\alpha_{X_i}(u)$.
	\item If $u,v\in X_i$ for some $i\in [t]$, we set $\sigma^+(u,v):=\sigma(r_Y,r'_Y)$ where $r=\alpha_{X_i}(u)$, $r'=\alpha_{X_i}(v)$.
\end{itemize}
\proofsubparagraph*{Construction of $\prec^+$:} 
In $\prec'$, replace each block $B\in P$ with $\alpha_{X_1}^{-1}(B)\alpha_{X_2}^{-1}(B)\cdots\alpha_{X_t}^{-1}(B)$ if $\zeta(B)= \asc $ and with $\alpha_{X_t}^{-1}(B)\alpha_{X_{t-1}}^{-1}(B)\cdots\alpha_{X_1}^{-1}(B)$ if $\zeta(B)= \desc $.
\\ \\
We now show that $\linLs$ is an $\ell$-page stack (queue) layout of $G^+$. 
Let $\Upsilon=S\cup S'\cup X\cup Y\cup Z$. By using Observation~\ref{obs:pagecolor} and the definition of $\prec'$, $(P,\zeta)$ and $\linLs$, one can verify that for each $a<b<c\in [t]$, $\linLs[G^+[S\cup S'\cup X_a\cup X_b\cup X_c]]$ is an $\ell$-page stack(queue) layout.
This is because $G^+[S\cup S'\cup X_a\cup X_b\cup X_c]$ is isomorphic to $G[\Upsilon]$ and $\linL[G[\Upsilon]]$ is the stack(queue) layout obtained from $\linLs[G^+[S\cup S'\cup X_a\cup X_b\cup X_c]]$ by mapping $X_a,X_b,X_c$ to $X,Y,Z$ via $R$.

Now suppose there are two edges $(a,b)$ and $(c,d)$ that conflict with each other in $\linLs$. Since there are no edges across large groups. $a,b,c,d\in X_i\cup X_j\cup S\cup S'$ for some $i<j\in [t]$. But this implies $\linLs[G^+[S\cup S'\cup X_i\cup X_j]]$ is not an $\ell$-page stack(queue) layout of $G^+[S\cup S'\cup X_i\cup X_j]$, which is a contradiction.
\end{prooflater}

\section{An $\textsf{XP}$-Algorithm For Bounded-Width Linear Layouts}\label{sec:poly-pw}
In this section, we present an algorithm to decide if a given graph $G$ has a stack (or queue) layout on $\ell$ pages and with page width $q$, i.e., an $\ell$-page $q$-width linear layout, in $n^{\BigO{q\cdot\ell}}$ time.
In the affirmative case, it can also output such a layout.
Since only little adaption is required between stack and queue layouts, we only discuss problem specific changes where necessary.
We will first give an intuitive overview over our approach and the core ideas to showing its correctness, details can be found in \shortLong{\Cref{sec:poly-state-graph-short,sec:decompose-cut-set} and the full version}{\Cref{sec:poly-state-graph,sec:decompose-cut-set}}.

\subparagraph*{Bounded-size Interfaces.}
Observe that, for each vertex $v \in V$ in a stack (or queue) layout \linL of $G$, the edges $\EdgesCrossingV[\prec]{v}$ that span the spine right next to $v$, induce the cut $(\LeftAtOfV[\prec]{v}, \RightOfV[\prec]{v})$ in $G$; see \Cref{fig:poly-pw-nicely-oriented-cut-sets}a.
The size of the cut-set $F = \EdgesCrossingV[\prec]{v}$ only depends on the number of pages and their width.
Our algorithm is centered around the insight that, through limiting these parameters and thus the size of any such cut, we limit the size of the ``interface'' (a notion we will define more precisely in a second) between a drawing on its left side and a drawing on its right side (\Cref{lem:poly-pw-cuts}).
We note that this insight parallels to the approach used by Saxe~\cite{Sax.DPA.1980} to test if a graph has bandwidth at most $k$.
In our setting, replacing the left side of such a cut in any solution drawing with one that provides the same interface will yield another solution.
This is because having the same interface will imply inducing the same decompositions into left and right vertices (\Cref{lem:poly-pw-cut-inudce-unique}), the statement then follows by simply gluing the two drawings together along the cut.

Our bounded-size interface consists of the set of edges $F$ that intersect the cut right next to $v$, their (vertical) order along the cut, and the information which edge endpoint lies on which side of the cut, together with a page assignment for the edges in $F$.
We will encapsulate this information in what we call a (nicely) \emph{oriented cut-set}; see \Cref{def:nicely-oriented-cut}.

\begin{figure}
	\centering
	\includegraphics[page=1]{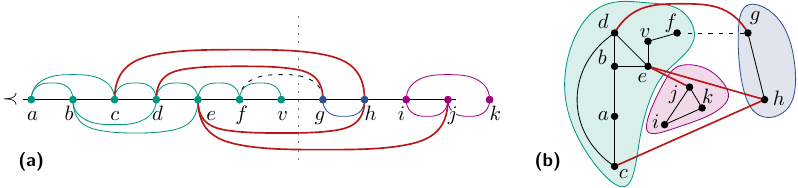}
	\caption{
		The two-page width-five stack layout in~\textbf{\textsf{(a)}} induces the cut-set $F = \EdgesCrossingV[\prec]{v}$ indicated with the red edges.
		\textbf{\textsf{(b)}} The three connected components in $G \setminus F$ are indicated with color.
		Observe that $(F, c \prec d \prec e \prec g \prec h \prec j)$ is nicely oriented.
		The source $e$ and sink $g$ cannot be in the same connected component, as every path between them contains at least one edge of $F$; e.g., the edge $fg$ cannot exist\onlyLong{; see also the proof of \Cref{lem:poly-pw-cuts}}.
}
	\label{fig:poly-pw-nicely-oriented-cut-sets}
\end{figure}

\subparagraph*{Our Dynamic Programming Solution.}
Given only an arbitrary oriented cut-set $F$ for an instance $G$, we are able to infer which connected components and thus which vertices of the graph lie on the left (or also right) side of the cut (\Cref{lem:poly-pw-cut-side}; see also \Cref{fig:poly-pw-nicely-oriented-cut-sets}b).
This now allows us to solve this problem via dynamic programming.
We build a directed acyclic state graph where each state corresponds to one configuration of our interface, i.e. one (nicely) oriented cut-set together with a page assignment for its edges (formalized as \ifthenelse{\boolean{cameraready}}{$\mathcal{N}$C 1 and 2 in the full version~\cite{ARXIV}}{\Cref{nc:nicely-oriented} and \ref{nc:pagewidth} in \Cref{sec:poly-state-graph}}).
We add an arc between two states if we can move from one oriented cut-set to the other by moving exactly one vertex (which we inferred to be on the right side via \Cref{lem:poly-pw-cut-side}) to the (end of a linear order used on the) left without directly violating a necessary stack or queue layout property (i.e., no crossing or nesting of edges) with regard to only the edges in the current cut-set; the exact conditions are again formalized as \ifthenelse{\boolean{cameraready}}{$\mathcal{A}$C 1--4 in the full version~\cite{ARXIV}}{\Cref{ac:directed-cut-set}--\ref{ac:transition} in \Cref{sec:poly-state-graph}}.
In \Cref{lem:poly-pw-feasible-states}, we will show that this means that we can extend any partial solution represented by the current state with the further vertex when moving along such an arc to obtain a solution for the new state.
Then, there exists a solution to our instance if and only if there exists a directed path between the (artificially-distinguished) empty beginning and end states.
As finding such a path takes linear time in the size of the state graph, \Cref{thm:pagewidth} then follows from the insight that the state graph has a size bounded by $n^{\BigO{q \cdot \ell}}$ and can be computed in this time.

\onlyLong{In the following sections, we formalize the above-provided intuitive overview.}

\subsection{Decomposing Linear Layouts Using Oriented Cut-Sets}\label{sec:decompose-cut-set}
	As already hinted above, our algorithm uses at its core cut-sets of $G$ to compute the desired stack (or queue) layout \linL[G].
	Before we discuss their usage in detail, 
	let us first take a closer look at them and define their desired properties.

	\begin{observation}\label{obs:cut-set}
    Given a connected graph $G$ and a cut-set $F \subseteq E$ that separates the graph into $k \geq 2$ connected components $C_1,\ldots,C_k$.
    For all $i\in [k]$, component $C_i$ contains at least one vertex that is an endpoint of an edge in $F$.
	\end{observation}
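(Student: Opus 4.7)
The plan is to combine the connectedness of $G$ with the hypothesis $k \geq 2$. Fix an arbitrary index $i \in [k]$. Since there is at least one further component $C_j$ with $j \neq i$, we can select any vertex $v \in C_i$ and any vertex $u \in V(G) \setminus V(C_i)$, and use the connectedness of $G$ to obtain a path $\pi$ from $v$ to $u$ in $G$.

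Next, I would trace $\pi$ from $v$ toward $u$ and locate the first edge $e = xy$ along $\pi$ for which $x \in V(C_i)$ but $y \notin V(C_i)$; such an edge must exist since $\pi$ starts in $V(C_i)$ and ends outside of it. I would then argue that $e \in F$: if $e \notin F$, then $e$ would be an edge of $G - F$ with endpoints in two distinct connected components of $G - F$, contradicting the definition of $C_i$ as a connected component of $G - F$. Consequently, $x \in V(C_i)$ is an endpoint of an edge of $F$, which is exactly what the statement requires.

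The argument is essentially immediate from definitions; the only ingredient beyond unfolding definitions is the use of a $v$-to-$u$ path in $G$, which is guaranteed by connectedness. Hence I do not anticipate any real obstacle in carrying out the proof, and the write-up should consist only of the two short steps above.
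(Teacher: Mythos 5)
Your argument is correct. The paper states this as an observation without proof, and your write-up is exactly the natural argument one would give: pick $v \in V(C_i)$ and $u$ outside $C_i$, take a $v$--$u$ path guaranteed by connectedness, and the first edge leaving $C_i$ along that path must lie in $F$, else $C_i$ would not be a maximal connected subgraph of $G - F$.
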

We consider in addition to a cut-set $F$ a total order $\prec_F$ on the endpoints of its edges and call this the \emph{oriented cut-set}, denoted as $\DirectedCut{F}$.
While a single cut-set $F$ has \BigO{(2\Size{F})!} different oriented cut-sets, we are only interested in ``nicely oriented'' cut-sets.
Let $F$ be a cut-set and $\prec_F$ a linear order on the endpoints of edges in $F$.
A vertex $v \in V(F)$ is a \emph{source} (or \emph{sink}) in \DirectedCut{F} if for every edge $uv \in F$ we have $v \prec_{F} u$ ($u \prec_{F} v$, respectively).

\begin{definition}\label{def:nicely-oriented-cut}
  For a cut-set $F$, $\DirectedCut{F}$ is \emph{nicely oriented} if and only if each vertex $v \in V(F)$ is either a source or sink, each connected component of $G - F$ contains either only sources or only sinks, and all sources are left of every sink in $\prec_{F}$.
\end{definition}

To ease presentation, we %
allow the linear order $\prec_F$ of an oriented cut-set \DirectedCut{F} to be a total order of a super-set of $V(F)$. %
We consider $\DirectedCut[\prec]{F}$ and $\DirectedCut[\prec']{F}$ identical if $\prec$ and $\prec'$ agree on $V(F)$.
We now establish a crucial connection between linear layouts and nicely oriented cut-sets of $G$; see also \Cref{fig:poly-pw-nicely-oriented-cut-sets}.

\begin{restatable}\restateref{lem:poly-pw-cuts}{lemma}{lemmaPolyPwCuts}
	\label{lem:poly-pw-cuts}
	Let $G$ be a graph and \linL be an $\ell$-page $q$-width linear layout of $G$.
	For every vertex $v \in V$ that is not the rightmost in $\prec$, the edges $\EdgesCrossingV[\prec]{v}$ form a cut-set of size at most $q\cdot\ell$ that induces the cut $(\LeftAtOfV[\prec]{v}, \RightOfV[\prec]{v})$ in $G$.
	Furthermore, $\DirectedCut[\prec]{\EdgesCrossingV[\prec]{v}}$ is nicely oriented.
\end{restatable}
\begin{prooflater}{plemmaPolyPwCuts}
	Let \linL be an $\ell$-page $q$-width stack (or queue) layout of $G$.
	We consider an arbitrary vertex $v \in V$ that is not the rightmost in $\prec$ and first show that $\EdgesCrossingV{v}$ is a cut-set of size at most $q\cdot\ell$ inducing the cut $(\LeftAtOfV{v}, \RightOfV{v})$.
	Afterwards, we show that $\DirectedCut[\prec]{\EdgesCrossingV{v}}$ is nicely oriented.
	
	Towards a contradiction, assume that $\EdgesCrossingV{v}$ is not a cut-set of $G$.
	In particular, this means that there exists two vertices $u \in \LeftAtOfV{v}$ and $w \in \RightOfV{v}$ that are adjacent in $G - \EdgesCrossingV{v}$.
	As $u \in \LeftAtOfV{v}$ and $w \in \RightOfV{v}$, we have $u \preceq v$ and $v \prec w$.
	However, this implies $uw \in \EdgesCrossingV{v}$ by the definition of \EdgesCrossingV{v}, i.e., these two vertices cannot exist.
	Hence, $\LeftAtOfV[\prec]{v}$ and $\RightOfV[\prec]{v}$ are (at least) two connected components in $G - \EdgesCrossingV{v}$, which means that $\EdgesCrossingV{v}$ is a cut-set inducing the cut $(\LeftAtOfV[\prec]{v}, \RightOfV[\prec]{v})$.
	The size of $\EdgesCrossingV{v} \leq q \cdot \ell$ follows directly from the definition of page width.

	Consider the oriented cut-set $\DirectedCut[\prec]{\EdgesCrossingV{v}}$.
	Towards a contradiction, assume that $\DirectedCut[\prec]{\EdgesCrossingV{v}}$ is not nicely oriented.
	This means that either a vertex is both a source and a sink, there exists a connected component of $G - \EdgesCrossingV{v}$ that contains a source and a sink, or some source vertex $u$ is right of a sink vertex $v$ in $\prec$.
	The first and last case cannot occur by the construction of $\DirectedCut[\prec]{\EdgesCrossingV{v}}$: The vertices left of (and including) $v$ are sources, all others are sinks.
	Thus, assume that there exists a connected component $C$ with a source $u_1$ and a sink $u_2$.
	Hence, we can identify two edges $u_1w_1, w_2u_2 \in \EdgesCrossingV{v}$ such that $u_1, u_2 \in V(C)$, $w_1, w_2 \notin V(C)$, $u_1 \prec w_1$, and $w_2 \prec u_2$ holds.	
	It holds  $u_1 \preceq v \prec u_2$ by the definition of$\DirectedCut[\prec]{\EdgesCrossingV{v}}$.
	As $C$ is a connected component, there is a path $P$ in $C$ connecting $u_1$ and $u_2$.
	However, then we can find at least one edge $u_3w_3$ on $P$ such that $u_3 \preceq v \prec w_3$; see for example $fg$ in \Cref{fig:poly-pw-nicely-oriented-cut-sets}.
	By definition, we have then $u_3w_3 \in \EdgesCrossingV{v}$, which contradicts the existence of $P$ in $C$.
	As this holds for all such paths, we conclude that $\DirectedCut[\prec]{\EdgesCrossingV{v}}$ is nicely oriented.
\end{prooflater}

A liner order $\prec$ \emph{induces} an oriented cut-set $\DirectedCut[\prec']{F}$ if there exists a vertex $v$ that witnesses $\DirectedCut[\prec]{\EdgesCrossingV[\prec]{v}} = \DirectedCut[\prec']{F}$.
Note that %
the same oriented cut-set can be induced by different linear orders.
From \Cref{lem:poly-pw-cuts}, we can deduce that all cuts $(A, B)$ of an $\ell$-page $q$-width stack (and queue) layout of $G$ can be constructed using at most $\binom{m}{\ell \cdot \pw} = \BigO{m^{\ell \cdot \pw}}$ different cut-sets.
We aim to use their oriented counterparts to obtain the spine order $\prec$ of the desired stack (or queue) layout.
We next show that we can efficiently compute for a given oriented cut-set $\DirectedCut{F}$ a witnessing linear order that induces $\DirectedCut{F}$.
\begin{restatable}\restateref{lem:poly-pw-cut-side}{lemma}{lemmaPolyPwCutSide}				\label{lem:poly-pw-cut-side}
	Given a graph $G$ and a nicely oriented cut-set \DirectedCut{F} of $G$, we can compute a linear order $\prec$ of $V$ that induces \DirectedCut{F} in \BigO{n + m} time.
\end{restatable}
\begin{prooflater}{plemmaPolyPwCutSide}
  Let \DirectedCut{F} be a nicely oriented cut-set of $G$.
  Consider the $k \geq 2$ connected components $C_1, \ldots, C_k$ of $G - F$.
  Every $C_i$, $i \in [k]$, contains at least one endpoint of an edge in $F$ and as \DirectedCut{F} is nicely oriented, every $C_i$ contains either only sources or only sinks.
  We can assume that the connected components are ordered such that, for some $k' \leq k$, all $C_i$, $1 \leq i \leq k'$, contain sources and all $C_j$, $k' < j \leq k$, contain sinks.

  For every $i \in [k]$, let $\prec_i$ be an arbitrary linear order of $V(C_i) \setminus V(F)$.
  To construct $\prec$, we take the transitive closure of $\prec_1 \oplus \ldots \oplus \prec_{k'} \oplus \prec_F \oplus \prec_{k' + 1} \oplus  \ldots \oplus \prec_k$, where $\oplus$ indicates the concatenation of total orders on pairwise disjoint sets.
  We can obtain the connected components of $G - F$ and the orders $\prec_i$ in \BigO{m} time by using, for example, a breadth-first search (BFS).
  The construction of $\prec$ takes $\BigO{\sum_{i \in [k]}\Size{V(C_i)}} + \BigO{\Size{F}} = \BigO{n}$ time.
  
  We now show that $\DirectedCut[\prec]{\EdgesCrossingV[\prec]{v}} = \DirectedCut{F}$ holds, where $v$ is the rightmost source in \DirectedCut{F}.
  As $\prec_F$ occurs as a suborder in $\prec$, we only need to show equality of the two sets.
  By the same argument, $F \subseteq \EdgesCrossingV[\prec]{v}$ follows immediately.
  We thus focus on the other direction and consider an arbitrary edge $uw \in \EdgesCrossingV[\prec]{v}$ for which we assume, for the sake of a contradiction, that $uw \notin F$ holds.
  As $F$ is a cut-set, $u, w \in V(C_i)$ must hold for some $i \in [k]$, i.e., the endpoints are in the same connected component $C_i$.
  However, then, by the construction of $\prec$ and as $v$ is the rightmost source in $\prec_F$, we have either $u,w \preceq v$ (if $i \leq k'$), or $v \prec u,w$ (if $k' < i$).
  This contradicts $uw \in \EdgesCrossingV[\prec]{v}$, which requires $u \preceq v \prec w$ or $w \preceq v \prec u$.
 	Hence, $\EdgesCrossingV{v} = F$ holds and combined with the arguments from above, this implies that $\prec$ induces \DirectedCut{F}.    
\end{prooflater}

\Cref{lem:poly-pw-cut-side} only shows that we can compute \emph{some} linear order $\prec$ of $V$ that induces $\DirectedCut{F}$.
However, to efficiently use oriented cut-sets in our state graph, we must be able to precisely determine which vertices have already been placed on the spine without storing them explicitly.
The following lemma lays the foundation for this, as it shows that no matter which linear order $\prec$ we choose to induce an oriented cut-set \DirectedCut{F}, we obtain the same partition into left and right vertices.

\begin{restatable}\restateref{lem:poly-pw-cut-inudce-unique}{lemma}{lemmaPolyPwCutInduce}
	\label{lem:poly-pw-cut-inudce-unique}
	Let \DirectedCut{F} be a nicely oriented cut-set. Furthermore, let $\prec$ and $\prec'$ be two linear orders such that $\DirectedCut[\prec]{\EdgesCrossingV[\prec]{v}} = \DirectedCut[\prec']{\EdgesCrossingV[\prec']{u}} = \DirectedCut{F}$ for some $u, v \in V$, i.e., they induce $\DirectedCut{F}$.
	Then we have $\LeftAtOfV[\prec]{v} = \LeftAtOfV[\prec']{u}$.
\end{restatable}
\begin{prooflater}{plemmaPolyPwCutInduce}	
  We first show that $(\LeftAtOfV[\prec]{v}) \subseteq (\LeftAtOfV[\prec']{u})$ holds and let $w \in \LeftAtOfV[\prec]{v}$ be an arbitrary vertex, i.e., we have $w \preceq v$.
  For the sake of a contradiction, let us assume $w \notin (\LeftAtOfV[\prec']{u})$, i.e., $w \in \RightOfV[\prec']{u}$ and thus $u \prec' w$ holds.
  First, observe that the vertex $w$ cannot be an endpoint of an edge in $F$, as it would then be a source in $\DirectedCut[\prec]{\EdgesCrossingV[\prec]{v}}$ but a sink in $\DirectedCut[\prec']{\EdgesCrossingV[\prec']{u}}$, which is impossible.
  Consider now the graph $G - F$ and the connected component $C$ with $w \in V(C)$.
  $C$ contains at least one vertex, let it be $x$, that is an endpoint of an edge in $F$.
  As both $w$ and $x$ are in the same connected component $C$, there exists a path $P$ in $C$ between $w$ and $x$.
  Since $w \preceq v$ holds, we conclude that $y \preceq v$ holds for every vertex $y$ on the path $P$.
  Otherwise, at least one edge of $P$ would be in $\EdgesCrossingV[\prec]{v}$ and thus contradict the existence of $P$ in $C$.
  Thus, in particular $x \preceq v$ must hold.
   
  Consider now $\prec'$ and $u$ and recall that we have $\DirectedCut[\prec']{\EdgesCrossingV[\prec']{u}} = \DirectedCut{F}$.
 	Furthermore, observe that the existence of the path $P$ only depends on $F$ and thus we can traverse its vertices also in $\prec'$.
 	Recall that $x \preceq v$ holds, which implies that $x$ is a source in $C$, and hence we have $x \preceq' u$.
 	Furthermore, we have $w \notin (\LeftAtOfV[\prec']{u})$ by assumption, which means $x \preceq' u \prec' w$.
 	This implies that there exists an edge $zy$ in $P$ such that $y \preceq' u \prec' z$ holds.
 	However, then $yz \in \DirectedCut[\prec']{\EdgesCrossingV[\prec']{u}}$ holds, i.e., is an edge of the cut-set $F$ and, therefore, the path $P$ cannot exist in $G - F$.
 	Since the path $P$ was selected arbitrarily, above arguments hold for any path in $C$ between $w$ and $x$ and they are not in the same connected component in $G - F$.
 	This contradicts the fact that $\prec$ and $\prec'$ induce $\DirectedCut{F}$.
 	We conclude that $w \preceq' u$ must hold.
 	Hence, $(\LeftAtOfV[\prec]{v}) \subseteq (\LeftAtOfV[\prec']{u})$.
 	The direction $(\LeftAtOfV[\prec']{u}) \subseteq (\LeftAtOfV[\prec]{v})$ is symmetric. 	
\end{prooflater}

\begin{statelater}{stategraphFullDetails}
\subsection{Using Oriented Cut-Sets to Construct a State Graph}\label{sec:poly-state-graph}
We now combine \Cref{lem:poly-pw-cuts,lem:poly-pw-cut-side} to construct our state graph $H$ with vertex set $N$, from now on called nodes to distinguish them from the vertices of $G$, and arc set $A$.
Every node represents a nicely ordered cut-set \DirectedCut{F} of $G$ with $F \subseteq E$ and $1 \leq \Size{F} \leq q \cdot \ell$ together with an assignment $\sigma_{F}$ of the edges $F$ to the $\ell$ pages.
We only consider \emph{consistent} nodes, i.e., where it holds that
\begin{enumerate}[($\mathcal{N}$C1)]
	\item \DirectedCut{F} is nicely oriented, and\label[NC]{nc:nicely-oriented}
	\item for every $p \in [\ell]$ we have %
    $\Size{\sigma_F^{-1}(p)} \leq q$
    \label[NC]{nc:pagewidth}
\end{enumerate}

For a node $S = (F, \prec_F, \sigma_F) \in N$, we let $(\StateProcessed{S}, \StateNotProcessed{S})$ denote the cut induced by the (directed) cut-set \DirectedCut{F}, where we fix \StateProcessed{S} to be those vertices that have already been processed and \StateNotProcessed{S} to those that have not yet been processed.
Note that these sets correspond to $\LeftAtOfV[\prec]{v}$ and $\RightOfV[\prec]{v}$ for a linear order $\prec$ that induces $F$, i.e., $\DirectedCut[\prec]{\EdgesCrossingV[\prec]{v}} = \DirectedCut{F}$ holds.
Thanks to \Cref{lem:poly-pw-cut-side}, we can compute these sets in linear time.
We add two dummy nodes $S_{\emptyset}$ and $S_{V}$ to $N$ that correspond to empty cut-sets having all nodes in exactly one of these sets, i.e., we have $\StateProcessed{S_{\emptyset}} = \emptyset$ and $\StateProcessed{S_V} = V$, and, symmetrically, $\StateNotProcessed{S_{\emptyset}} = V$ and $\StateNotProcessed{S_V} = \emptyset$.
Having defined the set of nodes of $N$, we now turn our attention to its arcs.
On an intuitive level, two nodes $S_X = (F_X, \prec_{F_X}, \sigma_{F_X})$ and $S_Y = (F_Y, \prec_{F_Y}, \sigma_{F_Y})$ are connected by an arc $S_XS_Y \in A$ if $\DirectedCut{F_Y}$ can be obtained from $\DirectedCut{F_X}$ by moving one vertex from $\StateNotProcessed{S_X}$ to $\StateProcessed{S_X}$.
More formally, we have $S_XS_Y \in A$ if and only if all the following criteria are met:
\begin{enumerate}[($\mathcal{A}$C1)]
	\item There exists a vertex $v \in \StateNotProcessed{S_X}$ such that\label[AC]{ac:directed-cut-set}
	\begin{align*}
		F_Y = (F_X \setminus \{uv \in F_X \mid u \in \StateProcessed{S_X}\}) \cup \{vw \mid w \in \StateNotProcessed{S_X}\}.
	\end{align*}
	This implies that $v \in \StateProcessed{S_Y}$ holds.
	Furthermore, the vertex $v$ is unique as $G$ is connected.
	Thus, we \emph{label} the edge $S_XS_Y$ with $v$ and write $\lambda(S_XS_Y) = v$.
	\item For all $u,w \in V(F_X \cap F_Y)$ we have $u \prec_{F_X} w \Leftrightarrow u \prec_{F_Y} w$ and for all $e \in F_X \cap F_Y$ we have $\sigma_{F_X}(e) = \sigma_{F_Y}(e)$.\label[AC]{ac:consistency}
	\item If $\lambda(S_XS_Y)$ is an endpoint of an edge in $F_X$, it must be the leftmost sink in $\prec_{F_X}$. If $\lambda(S_XS_Y)$ is an endpoint of an edge in $F_Y$, it must be the rightmost source in $\prec_{F_Y}$.
    \label[AC]{ac:leftmost-rightmost}
	\item The linear layouts $\linL[F_X]$ and $\linL[F_Y]$ are valid $\ell$-page $q$-width stack (or queue) layouts of the graphs $G[F_X]$ and $G[F_Y]$, respectively.\label[AC]{ac:transition}
\end{enumerate}

Note that \Cref{ac:directed-cut-set,ac:consistency} ensure consistency for $\prec_{F_Y}$ and $\sigma_{F_Y}$ with respect to $\prec_{F_X}$ and $\sigma_{F_X}$.
That is, in $S_Y$ we move exactly the vertex $\lambda(S_XS_Y)$ from $\StateNotProcessed{S_X}$ to $\StateProcessed{S_X}$ but do not make further changes to $\prec_{F_Y}$ and $\sigma_{F_Y}$.
In the end, we will use \Cref{ac:leftmost-rightmost,ac:transition} to show that the obtained layout is indeed a stack or queue layout.
This completes the definition of the state graph $H = (N, A)$; we schematize an intuitive example in \Cref{fig:poly-pw-state-graph}.
\end{statelater}

\shortLong{
	\subsection{Using Nicely Oriented Cut-Sets to Show \Cref{thm:pagewidth}}\label{sec:poly-state-graph-short}
	The state graph $H$ contains a node for each possible nicely-oriented cut-set
	of $G$, together with the corresponding page assignment.
	The arcs in $H$ represent possible transitions from one nicely oriented cut-set to another that can occur when traversing a hypothetical solution from left to right (see \Cref{fig:poly-pw-state-graph}), i.e., that are obtained by moving exactly one vertex from the right to the left.
	A state $S = (F, \prec_{F_X}, \sigma_{F_X})$ is \emph{feasible} if and only if there exists an $\ell$-page $q$-width stack layout \linL[S] of $G[\StateProcessed{S} \cup V(F)]$ such that for all $u,v \in V(F)$ we have $u \prec_{F} v \Leftrightarrow u \prec_S v$ and $\sigma_F = \sigma_S\vert_{F}$.
	For queue layouts, the definition is analogous.
	In the full version, we show that, essentially, having arcs only between state pairs where both induce valid stack (or queue) layouts on their respective cut-sets is enough to preserve feasibilty along arcs.
}{
	 Next, we introduce the concept of feasible states for stack layouts:
	A state $S = (F, \prec_{F_X}, \sigma_{F_X})$ is \emph{feasible} if and only if there exists an $\ell$-page $q$-width stack layout \stackL[S] of $G[\StateProcessed{S} \cup V(F)]$ such that for all $u,v \in V(F)$ we have $u \prec_{F} v \Leftrightarrow u \prec_S v$ and $\sigma_F = \sigma_S\vert_{F}$.
	For queue layouts, the definition is analogous.
	We show that the arcs of our state graph preserve feasible states.
}

\begin{figure}
	\centering
	\includegraphics[page=1]{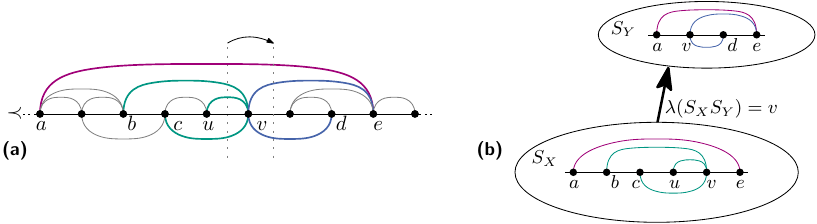}
	\caption{\textbf{\textsf{(a)}} Parts of a stack layout of $G$ with the two cut-sets $\EdgesCrossingV[\prec]{u}$ and $\EdgesCrossingV[\prec]{v}$. \textbf{\textsf{(b)}} The corresponding nodes in the state graph. The arc in~\textsf{(b)} corresponds to the transition from $\EdgesCrossingV[\prec]{u}$ to $\EdgesCrossingV[\prec]{v}$ in~\textsf{(a)}. Edges appearing in at least one cut-set are colored.}
	\label{fig:poly-pw-state-graph}
\end{figure}

\begin{restatable}\restateref{lem:poly-pw-feasible-states}{lemma}{lemmaPolyPwFeasibleStates}
	\label{lem:poly-pw-feasible-states}
	Let $H = (N, A)$ be a state graph and $S_XS_Y \in A$ one of its arcs. If $S_X$ is feasible for stack (queue) layouts then so is $S_Y$.
\end{restatable}
\begin{prooflater}{plemmaPolyPwFeasibleStates}
	We first show the statement for stack layouts and afterwards turn our attention to queue layouts.
	Throughout the proof, we let $S_X = (F_X, \prec_{F_X}, \sigma_{F_X})$ and $S_Y = (F_Y, \prec_{F_Y}, \sigma_{F_Y})$ be two nodes of $H$ with $S_XS_Y \in A$.

	\proofsubparagraph*{Stack Layouts.}
	Assume that $S_X$ is a feasible state, i.e., there exists an $\ell$-page $q$-width stack layout $\stackL[G_X]$ for the graph $G_X = G[\StateProcessed{S_X} \cup V(F_X)]$.
	We now construct a stack layout $\stackL[G_Y]$ for the graph $G_Y = G[\StateProcessed{S_Y} \cup V(F_Y)]$ based on $\stackL[G_X]$.
	For that, it is useful to observe that $G_X$ and $G_Y$ differ at most in terms of the vertex $\lambda(S_XS_Y)$ and its incident edges.
	To construct the spine order $\prec_{G_Y}$, we start by copying $\prec_{G_X}$.
	If $\lambda(S_XS_Y) \notin V(G_X)$, we insert it in $\prec_{G_Y}$ by setting $\lambda(S_XS_Y) \prec_{G_Y} w$ for all $w \in V(G_Y)$ which are a sink in $\DirectedCut{F_Y}$ and $u \prec_{G_Y} \lambda(S_XS_Y)$ for all other vertices $\lambda(S_XS_Y) \neq u \in V(G_Y)$.
	Finally, we order the endpoints of all edges in $F_Y$ incident to $v$ as specified in $\prec_{F_Y}$ and obtain $\prec_{G_Y}$ by taking the transitive closure of the above defined partial orders.
	For the page assignment $\sigma_{G_Y}$, we set $\sigma_{G_Y}(e) = \sigma_{F_Y}(e)$ for all $e \in F_Y$ and $\sigma_{G_Y}(e) = \sigma_{G_X}(e)$ otherwise. %

	To show that $\stackL[G_Y]$ witnesses the feasibility of $S_Y$, we have to show that it is a valid $\ell$-page $q$-width stack layout of $G_Y$ with $\prec_{F_Y} \subseteq \prec_{G_Y}$ and $\sigma_{F_Y} = \sigma_{G_Y}\vert_{F}$.
	To see that the latter two criteria hold, observe that our construction mimics, on the one hand, $\linL[F_Y]$ for all endpoints of edges in $F_Y \setminus F_X$, and, on the other hand, $\stackL[G_X]$ (which witnesses feasibility of $S_X$) for all remaining edges in $F_X \cap F_Y$ and their endpoints.
	As \Cref{ac:consistency} ensures that these edges and their endpoints have in $S_X$ and $S_Y$ the same page assignment and relative order, we conclude that $\prec_{F_Y} \subseteq \prec_{G_Y}$ and $\sigma_{F_Y} = \sigma_{G_Y}\vert_{F}$ indeed holds.

	For the former criterion, we first observe that the number of pages follows directly from the construction.
	To show that $\pw[\stackL[G_Y]] \leq q$, we assume towards a contradiction that this is not the case.
	Hence, there exists a vertex $v \in V(G_Y)$ and a page $p \in [\ell]$ such that we have $\Size{\EdgesCrossingVPage{v}{p}} > q$.
	There are two cases that we can consider depending on the relative position of $v$ and $\lambda(S_XS_Y)$:
	The first case, $\lambda(S_XS_Y) \preceq v$ immediately yields a contradiction, as we have in this case $\EdgesCrossingV{v} \subseteq F_Y$.
	Thus, $\Size{\EdgesCrossingVPage{v}{p}} > q$ would contradict the fact that $S_Y$ is a feasible state; recall \Cref{nc:pagewidth}.
	This leaves us only with the case $v \prec \lambda(S_XS_Y)$.
	However, since we copy $\stackL[G_X]$ to construct $\stackL[G_Y]$, all edges in $\EdgesCrossingVPage{v}{p}$, including those incident to $\lambda(S_XS_Y)$, are present in $\stackL[G_X]$ on the same pages and with the same relative order among their endpoints; recall also \Cref{ac:consistency}.
	Hence, $\pw[\stackL[G_X]] > q$ holds, a contradiction to the fact that $S_X$ is feasible.
	Thus, we conclude $\pw[\stackL[G_Y]] \leq q$.

	We now prove that $\stackL[G_Y]$ is also a valid stack layout.
	For the sake of a contradiction, assume that there are two edges $u_1w_1, u_2w_2 \in E(G_Y)$ such that $\sigma_{G_Y}(u_1w_1) = \sigma_{G_Y}(u_1w_1)$ and $u_1 \prec_{G_Y} u_2 \prec_{G_Y} w_1 \prec_{G_Y} w_2$, i.e., they cross on page $\sigma_{G_Y}(u_1w_1)$.
	First, recall that the graph $G_Y$ differs from $G_X$ only in terms of $\lambda(S_XS_Y)$ and its incident edges and that $\stackL[G_Y]$ is constructed based on $\stackL[G_X]$.
	Thus, a crossing in $\stackL[G_Y]$ between $u_1w_1$ and $u_2w_2$ implies that at least one edge is part of $F_Y \setminus F_X$.
	Otherwise, $\stackL[G_X]$ would not be a valid stack layout, which contradicts the feasibility of $S_X$.
	Having $u_1w_1 \in F_Y \setminus F_X$ implies $u_1 = \lambda(S_XS_Y)$, which contradicts $u_2w_2 \in E(G_Y)$ as we have $u_1 \prec_{G_Y} u_2 \prec_{G_Y} w_1 \prec_{G_Y} w_2$, i.e., $u_2, w_2 \in \StateNotProcessed{S_Y}$.
	Thus, $u_2w_2 \in F_Y \setminus F_X$ and $u_2 = \lambda(S_XS_Y)$ holds.
	However, from $u_1 \prec_{G_Y} u_2 \prec_{G_Y} w_1$ and the definition of $S_Y$ it follows $u_1w_1 \in F_Y$.
	However, a crossing between two edges from $F_Y$ contradicts the existence of the edge $S_XS_Y \in \mathcal{A}$, as $\linL[F_Y]$ must be a valid stack layout; recall also \Cref{ac:transition}.
	Combining all, we conclude that $\stackL[G_Y]$ is a valid $\ell$-page $q$-width layout that witnesses the feasibility of $S_Y$.

	\proofsubparagraph*{Queue Layouts.}
	The proof for queue layouts is analogous to the one above for stack layouts.
	In particular, observe that all arguments except those showing the non-existence of a crossing purely exploit general properties of linear layouts.
	Only for the last argument, we worked under the assumption that two edges on a page cross.
	For queue layouts, let us now assume that two edges $u_1w_1, u_2w_2 \in E(G_Y)$ on the page $\sigma_{G_Y}(u_1w_1) = \sigma_{G_Y}(u_1w_1)$ nest, i.e., that we have $u_1 \prec_{G_Y} u_2 \prec_{G_Y} w_2 \prec_{G_Y} w_1$.
	However, by the very same arguments we conclude that at least one edge must be in $F_Y \setminus F_X$ and that this edge must be $u_2w_2$.
	This allows us to again deduce from $u_1 \prec_{G_Y} u_2 \prec_{G_Y} w_2 \prec_{G_Y} w_1$ that $u_1w_1 \in F_Y$ holds.
	As this implies that the nesting already exists in $\linL[F_Y]$, this contradicts the arc $S_XS_Y$ due to criterion \Cref{ac:transition}.
	Hence, we also conclude for queue layouts that the feasibility of $S_X$ and the arc $S_XS_Y\in \mathcal{A}$ implies the feasibility of $S_Y$.
\end{prooflater}

We now have all tools at hand to present our algorithm.
\shortLong{
	Since the arcs of $H$ preserve existence of a (partial) $\ell$-page $q$-width stack (or queue) layout of $G$,
}{%
	In particular, observe that our state graph $H$ contains a node for each possible nicely-oriented cut-set that is induced by an $\ell$-page $q$-width stack (or queue) layout of $G$ (if it exists) together with the page assignment of its edges.
	Since the arcs of $H$ preserve existence of a (partial) solution,%
} we have reduced the problem of finding such a layout to the problem of finding a path in $H$ between the two special nodes $S_{\emptyset}$ and $S_V$.
\shortLong{In the full version, we}{We} show that $H$ can be computed in polynomial time for constant $\ell$ and $q$ and use this to summarize the main result of this section in the following theorem.

\thmpagewidth*
\begin{prooflater}{pthmpagewidth}
	Let $G$ be a graph on $n$ vertices and $m$ edges.
	First, we check if $m \leq (\ell + 1)n - 3\ell$ holds for stack, or $m \leq 2\ell n-\ell(2\ell + 1)$ for queue layouts and reject otherwise, as this are known upper bounds for the number of edges of a $\ell$-page stack and queue layout~\cite{dujmovic2004linear}.

	We now construct the state graph $H = (N, A)$ for $G$, $\ell$, and $q$.
	First, we bound its size and observe that $G$ has $\binom{m}{q\cdot\ell}$ different cut-sets $F \subseteq E$ of size $1 \leq \Size{F} \leq q\cdot\ell$ that we consider.
	Let $F \subseteq E$ be one such cut-set and recall that it consists of at most $q\cdot\ell$ edges.
	There are at most $\ell^{q \cdot \ell}$ different page assignment $\sigma_{F}$ of the edges $F$ and $(2\cdot q\cdot\ell)!$ different linear orders $\prec_{F}$ of their endpoints.
	Since $H$ has (at most) one node for each combination of $F$, $\prec_{F}$, and $\sigma_F$, this leaves us with at most $\BigO{m^{q \cdot \ell} \cdot \ell^{q \cdot \ell} \cdot (2\cdot q\cdot\ell)!}$ nodes overall, including the two dummy nodes $S_{\emptyset}$ and $S_{V}$.
    Recall $m = \BigO{n \cdot \ell}$.
    Since we can assume $q \leq m$ and $\ell \leq n^2$, and we have $\BigO{(2\cdot q\cdot\ell)!} = \BigO{(q\cdot\ell)^{q \cdot \ell}}$, we can bound each term and their product with $n^{\BigO{q \cdot \ell}}$.
    This is, therefore, a bound on the number of nodes in $H$.
    To count the number of arcs in $H$, we bound the outdegree of each node $S_X$ and observe that for each choice of $v \in \StateNotProcessed{S}$ the cut-set $F_Y$ obtained from $S_X$ is unique.
	Thus, the outdegree of each node is in $\BigO{n \cdot \ell^{q \cdot \ell} \cdot (2\cdot q\cdot\ell)!}$.
    By similar arguments, we can also bound this by $n^{\BigO{q \cdot \ell}}$.
    As each of the $n^{\BigO{q \cdot \ell}}$ nodes can have at most $n^{\BigO{q \cdot \ell}}$ outgoing edges, also the overall size of $H$ is in $n^{\BigO{q \cdot \ell}}$.
    Since the criteria \Cref{ac:directed-cut-set}--\ref{ac:transition} can be checked in polynomial time for each of the $n^{\BigO{q \cdot \ell}}$ pairs of nodes in $H$, we can construct the state graph $H$ in $n^{\BigO{q \cdot \ell}}$ time.

	Once the state graph $H$ has been constructed, we check if $H$ contains a directed path $P$ from $S_{\emptyset}$ to $S_V$.
	If so, we report that $G$ has a $\ell$-page $q$-width stack (or queue) layout.
	To also output a solution, we can obtain the spine order by concatenating the labels of the arcs in the order visited by $P$.
	As $G$ is connected and $H$ is easily seen to be acyclic (for every arc $S_XS_Y \in A$ we have $\StateProcessed{S_X} \subset \StateProcessed{S_Y}$), we reach $S_V$ only after having seen each vertex $v \in V$ exactly once as the label of an arc.
	The page assignment can be obtained by merging the individual page assignments stored in the nodes that form $P$.
	\Cref{ac:consistency} ensures that they are consistent.
	If there is no such path $P$ in $H$, we report that no desired layout exists.
	Note that we can use a BFS in $H$ with $S_{\emptyset}$ as source node to find $P$ in $n^{\BigO{q \cdot \ell}}$ time.

	\proofsubparagraph*{Correctness of the Algorithm.}
	To argue correctness of the algorithm, we need to show that the path $P$ exists if and only if $G$ has an $\ell$-page $q$-width stack (or queue) layout.
	For the forward direction, i.e., that the existence of a path $P$ implies that $G$ possesses a solution, we observe that the node $S_{\emptyset}$, i.e., the node that corresponds to the empty cut-set with $\StateProcessed{S_{\emptyset}} = \emptyset$, is feasible as $G[\StateProcessed{S_{\emptyset}}]$ is the empty graph and $S_{\emptyset}$ has no cut-set edges.
	Hence, by inductively applying \Cref{lem:poly-pw-feasible-states} on the nodes of $P$, we can eventually conclude that also the state $S_{V}$ is feasible.
	As $\StateProcessed{S_V} = V$, we have $G[\StateProcessed{S_V}] = G$, i.e., the layout that witnesses the feasibility also serves as a solution.
	Furthermore, it corresponds to the layout obtained from the path as described above.

	For the backward direction, i.e., that a solution $\linL$ of $G$ implies the existence of a path $P$ between $S_{\emptyset}$ and $S_V$ in $H$, we use \Cref{lem:poly-pw-cuts} to conclude that for each $v \in V$ there exists exactly one node $S_v \in N$ that contains precisely the edges in $\EdgesCrossingV{v}$ and matches the page assignment and vertex order from $\linL$.
	Thus, to show that $P$ exists in $H$, it remains to show that the required arcs exist in $H$.
	To see that $S_uS_v \in \mathcal{A}$ holds for two neighboring vertices $u,v\in V(G)$ with $u \prec v$, we can check each of the criteria \Cref{ac:directed-cut-set,ac:consistency,ac:leftmost-rightmost,ac:transition} and see that they hold:
	\Cref{ac:directed-cut-set} mimics the construction of $\EdgesCrossingV{v}$ based on $\EdgesCrossingV{u}$.
	In particular, this removes from $\EdgesCrossingV{u}$ all edges incident to $v$ that have as endpoint a vertex $w \in \LeftOfV{v}$ and adds those with an endpoint $w \in \RightOfV{v}$.
	Note that therefore also $\lambda(S_uS_v) = v$ holds as one would expect.
	\Cref{ac:consistency,ac:leftmost-rightmost} holds by construction and the existence of \linL witnesses that also Criterion \Cref{ac:transition} holds.
	Finally, applying the same reasoning allows us also to conclude that $S_{\emptyset}S_u, S_vS_V \in A$ holds, where $u$ and $v$ are the leftmost and rightmost vertices in $\prec$, respectively.
	Combining all, we conclude that the path $P$ must exist in $H$.
\end{prooflater}

The algorithm from \Cref{thm:pagewidth} runs in polynomial time for constant $\ell$ and $q$.
Furthermore, recall that deciding if a graph has a 2-page stack layout or 1-page queue layout is \NP-complete~\cite{yannakakis1989embedding,heath1992laying}.
An $\ell$-page $1$-width stack or queue layout witnesses that $G$ has cutwidth at most $\ell$, which is in general \NP-complete to decide~\cite{GJ.CIG.1979}.
This means that under well-established complexity assumptions, it is not possible to preserve \XP-tractability if either of the two parameters is dropped.

\section{Single-Exponential Algorithm for 1-Page Queue Layouts}
\label{sec:single-exp-ql}
In this section, we show that we can find a 1-page queue layout, if it exists, in $2^{\BigO{n}}$ time, which improves on the trivial algorithm that exhaustively considers all $n!$ possible spine orders.
Our algorithm exploits the equivalence between 1-page queue layouts and arched leveled planar embeddings established by Heath and Rosenberg~\cite{heath1992laying}.
We first show that we can reduce the problem of deciding whether a graph admits a 1-page queue layout to deciding whether at least one of $2^{\BigO{n}}$ \emph{labeled} instances of \probname{Arched Leveled Planarity} admits a solution.
While \probname{Arched Leveled Planarity} is \NP-complete~\cite{heath1992laying}, our labeled instances can be solved in linear time thanks to a reduction to the well-studied problem \probname{Level Planarity}~\cite{JL.LPE.1999,JL.LPE.2002}. %
We will assume that $G$ is connected (as each component can be treated separately) and, as $G$ must be planar~\cite{heath1992laying}, we assume $\Size{E} = \BigO{n}$.

We translate the definition of Heath and Rosenberg~\cite{heath1992laying} into modern terms as follows.
A graph $G$ has a \emph{leveled planar embedding} if there exists a \emph{level assignment} function $\gamma\colon V \to [h]$ that maps each $v \in V$ to one of $h \geq 1$ \emph{levels}, and a straight-line planar drawing $\Gamma$ of $G$ such that each vertex $v$ has y-coordinate $\gamma(v)$ and all edges are \emph{proper}, i.e., for all $uv \in E$ we have $\Size{\gamma(u) - \gamma(v)} = 1$.
The latter property allows us to express the drawing as a collection of total orders $\prec_i$, one for each level $i \in [h]$, that specifies the left-to-right order of the vertices on level $i$~\cite{JL.LPE.2002}.
For each level $i$, let $s_i$ be the rightmost vertex on level $i$ in $\Gamma$ such that we have $s_iw \in E$ for some $w \in V$ with $\gamma(w) = i + 1$, or if there is no such vertex, let $s_i$ be the leftmost vertex on level $i$.
\emph{Arched leveled planar embeddings} extend leveled planar embeddings by also allowing non-monotone  \emph{arching} edges $t_iw$ between the leftmost vertex $t_i$ on level $i$ and vertices $w \in V$ with $\gamma(w) = i$ and $s_i \preceq_i w$~\cite{heath1992laying}.
These edges can be drawn crossing-free below the lowest level; see \Cref{fig:single-exp-ql-arched}b.
\onlyLong{
The problem \probname{Arched Level Planarity} asks for a given graph $G$ together with a level assignment $\gamma$, if $G$ admits an arched leveled planar embedding with the level assignment $\gamma$.
In the problem \probname{Arched Leveled Planarity} studied by Heath and Rosenberg, we are only given $G$ and ask for an arched leveled planar embedding of it, i.e., the level assignment is not specified in the input~\cite{heath1992laying}.
At this point, we would like to make the reader aware of the closely related, well-studied, and linear-time solvable problem \probname{Level Planarity}~\cite{JL.LPE.1999,JL.LPE.2002}.
There, we seek a planar $y$-monotone drawing of $G$ where the $y$-coordinate of each vertex is specified by the given level assignment $\gamma$.
Note that there also edges between non-adjacent layers are allowed.
We will come back to \probname{Level Planarity} later.

}
\onlyShort{The problem \probname{Arched Leveled Planarity} asks whether $G$ has a level assignment that admits an arched leveled planar embedding.}
Heath and Rosenberg~\cite{heath1992laying} showed that the spine order of a 1-page queue layout of a graph $G$ induces a level assignment together with a vertex order on each level that together yield an arched leveled planar embedding of $G$ and vice versa; see \Cref{fig:single-exp-ql-arched}.

\begin{figure}
	\centering
	\includegraphics[page=1]{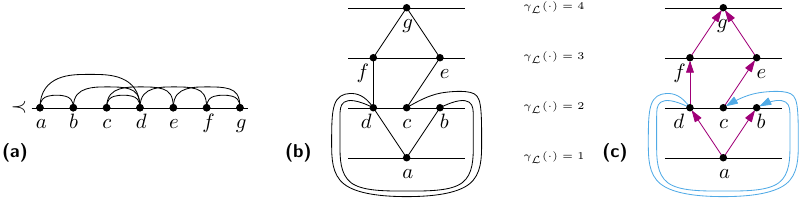}
	\caption{\textbf{\textsf{(a)}} A 1-page queue layout of $G$ and~\textbf{\textsf{(b)}} the corresponding arched leveled planar embedding obtained by the equivalence established by Heath and Rosenberg~\cite{heath1992laying}. \textbf{\textsf{(c)}} The labeling induced by the embedding from~\textbf{\textsf{(b)}}: Arcs labeled $\LabelOrdindary$ and $\LabelArch$ are colored lilac and blue, respectively.}
	\label{fig:single-exp-ql-arched}
\end{figure}

We show that, when seeking a 1-page queue layout of $G$, one can branch over the information needed to infer the level assignment for a corresponding arched leveled planar embedding. 
Consider an arbitrary orientation of the edge set $E$ as an oriented arc set $A$ and let $\delta\colon A \to \LabelImage$ be a function that determines if an arc $uv \in A$ should be \emph{ordinary} (\LabelOrdindary) or \emph{arching} (\LabelArch). %
The tuple $\mathcal{L}=(A, \delta)$ is called a \emph{labeling} of $G$; see also \Cref{fig:single-exp-ql-arched}c.
A level assignment $\gamma$ is said to be \emph{consistent} with $\mathcal{L}$ if and only if, for every $uv \in A$, it holds $\gamma(u) = \gamma(v) - 1$ if $\delta(uv) =\ \LabelOrdindary$ and $\gamma(u) = \gamma(v)$ if $\delta(uv) = \LabelArch$.
Similarly, an embedding $\Gamma$ is consistent with $\mathcal{L}$ if its level assignment is consistent and we have $u \prec_{\gamma(u)} v$ for every $uv \in \delta^{-1}(\LabelArch)$.
Clearly, the level assignment $\gamma_{\Gamma}$ of every arched leveled planar embedding $\Gamma$ of $G$ induces a labeling $(A_{\Gamma}, \delta_{\Gamma})$ that is consistent with $\Gamma$.
Conversely, we can reobtain $\gamma_{\Gamma}$ from this $(A_{\Gamma}, \delta_{\Gamma})$ by fixing an arbitrary vertex to level 1, performing a BFS from $v$ using the direction information in $A_{\Gamma}$ to determine levels for all other vertices above and below, and potentially shifting all assigned levels such that their numbers span the same range as those in $\Gamma$.
This is formalized by the following lemma.

\begin{restatable}\restateref{lem:single-exp-ql-edge-labeling}{lemma}{lemmaSingleExpEdgeLabeling}
	\label{lem:single-exp-ql-edge-labeling}
	Let $\mathcal{L} = (A, \delta)$ be a labeling of $G$.
	In linear time we can compute a level assignment $\gamma_{\mathcal{L}}\colon V \to [h]$ that is consistent with the labeling or report that no such assignment exists.
	If there exists an arched leveled planar embedding $\Gamma$ that induces $\mathcal{L}$, then there also exists one that induces $\mathcal{L}$ and uses the level assignment $\gamma_{\mathcal{L}}$.
\end{restatable}
\begin{prooflater}{plemmaSingleExpEdgeLabeling}
	We first discuss how we can derive a level-assignment $\gamma_{\mathcal{L}}$ that is consistent with $\mathcal{L}$, if one exists.
    In a second step, we show that an embedding of $G$ can be transformed into one that uses $\gamma_{\mathcal{L}}$.
	
	\proofsubparagraph*{Computing $\boldsymbol{\gamma_{\mathcal{L}}}$.}
	We perform a breadth-first search (BFS) traversal on the (undirected) graph $G$, starting at an arbitrary vertex $v \in V$ and initialize our level-assignment by setting $\gamma_{\mathcal{L}}(v) = 1$.
	Whenever we visit a vertex $w \in V$ via the edge $uw \in E$, we check the labeling $\mathcal{L}$ and extend the level assignment as follows:
    If $\delta(uw) =\ \LabelOrdindary$, we set $\gamma_{\mathcal{L}}(w) = \gamma_{\mathcal{L}}(u) + 1$ if $uw \in A$ or $\gamma_{\mathcal{L}}(w) = \gamma_{\mathcal{L}}(u) - 1$ if $wu \in A$.
    Otherwise, i.e., if $\delta(uw) = \LabelArch$, we set $\gamma_{\mathcal{L}}(w) = \gamma_{\mathcal{L}}(u)$.
    Before we assign a level to $w$, we check if it has already been assigned one.
    If so, we verify that the existing level assignment is consistent with the one determined by the above rules.
    If it is not, we report that no level assignment consistent with $\mathcal{L}$ exists.
    As $G$ is connected, every vertex gets assigned a level eventually (unless we report that no consistent level assignment exists).
    Depending on the choice of $v$, we can have $\gamma_{\mathcal{L}}(u) < 1$ for some $u \in V$.
    However, we can add a constant offset to ensure that vertices are only assigned to positive levels.
	Finally, we observe that $\gamma_{\mathcal{L}}$ is consistent with $\mathcal{L}$ by construction and can be computed in \BigO{n} time.

	\proofsubparagraph*{Equivalence with the Level-Assignment $\boldsymbol{\gamma_{\mathcal{L}}}$.}
	Let $\Gamma$ be an arched leveled planar embedding of $G$ that induces $\mathcal{L}$.
    By the same arguments as above, we can add (or remove) a constant $y$-offset to every vertex to obtain an embedding that starts at level one.
    Therefore, we can assume that the level assignment $\gamma_{\Gamma}$ of $\Gamma$ places vertices on every level in $[h]$ for some $h \geq 1$.
    As $\Gamma$ induces $\mathcal{L}$, $\gamma_{\Gamma}$ witnesses the consistency with $\mathcal{L}$, and our procedure described above returns a consistent level assignment $\gamma_{\mathcal{L}}$.
	Towards a contradiction, assume that $\gamma_{\Gamma}$ and $\gamma_{\mathcal{L}}$ are not equal.
	Thus, there exists a vertex $v \in V$ such that $\gamma_{\Gamma}(v) \neq \gamma_{\mathcal{L}}(v)$.
	We assume without loss of generality $\gamma_{\Gamma}(v) > \gamma_{\mathcal{L}}(v)$ as the other case is symmetric.
    Let $\gamma_{\Gamma}(v) - \gamma_{\mathcal{L}}(v) = d > 0$.
    As $\gamma_{\Gamma}(v) > \gamma_{\mathcal{L}}(v) \geq 1$, there exists a vertex $u \in V$ with $\gamma_{\Gamma}(u) = 1$.
    As $G$ is connected, there is a path $P$ between $v$ and $u$.
    Since both $\gamma_{\Gamma}$ and $\gamma_{\mathcal{L}}$ are consistent with $\mathcal{L}$, for every vertex $w$ on $P$ we have $\gamma_{\Gamma}(w) - \gamma_{\mathcal{L}}(w) = d$.
    In particular, $\gamma_{\Gamma}(u) - \gamma_{\mathcal{L}}(u) = d$.
    Since $\gamma_{\Gamma}(u) = 1$, we obtain from above equality $\gamma_{\mathcal{L}}(u) \leq 0$.
    This is a contradiction to the fact that $\gamma_{\mathcal{L}}(w) > 0$ holds for every vertex $w\in V$.    
	Hence, $\gamma_{\Gamma}(v) \neq \gamma_{\mathcal{L}}(v)$ cannot hold and we conclude %
	$\gamma_{\Gamma} = \gamma_{\mathcal{L}}$.
\end{prooflater}

\onlyShort{To now decide if a given graph $G$ has an arched leveled planar embedding $\Gamma$, we can branch over all $\BigO{4^n} = 2^{\BigO{n}}$ different labelings $\mathcal{L} = (A, \delta)$ of $G$.
In each branch, we use \Cref{lem:single-exp-ql-edge-labeling} to compute a level assignment $\gamma_{\mathcal{L}}$ or immediately reject the branch.
Thus, it remains to check in each branch, i.e., for each level assignment, if $G$ admits a corresponding arched leveled planar embedding. %
We do so using a linear-time reduction to the problem \probname{Level Planarity}, which requires a level assignment as part of its input but does not allow arching edges.
We emulate arching edges in \probname{Level Planarity} as follows (see also \Cref{fig:single-exp-ql-reduction}).
First, we enclose the whole graph $G$ in a \emph{frame}, which is a cycle spanning from below the lowest level of $G$ to above its highest level that will have a unique embedding (up to reflection) and for which we ensure that the (remaining) graph $G$ is drawn inside of it.
Between each pair of levels $i, i+1$, we add a level $i+0.5$ and subdivide all edges to ensure properness.
We use $l_i$ and $r_i$ to refer to the two vertices of the frame cycle that are on level $i+0.5$.
We replace each arching arc $uv \in A$ with $\delta(uv) =\ \LabelArch$ on level $i$ with the edges $ul_{i-1}$, $ul_{i}$, and $vr_{i}$ to $G'$; see \Cref{fig:single-exp-ql-reduction}.
Observe that this forces $u$ to become the leftmost vertex on level $i$ and $v$ to be right of any vertex adjacent to one on level $i + 1$.
We provide a formal construction and show equivalence in \ifthenelse{\boolean{cameraready}}{the full version~\cite{ARXIV}}{\Cref{app:sec:single-exp-ql}}.
Using the known algorithm for testing \probname{Level Planarity} in linear time~\cite{JL.LPE.1999,JL.LPE.2002}, we finally obtain our third contribution.}

\begin{figure}
	\centering
	\includegraphics[page=1]{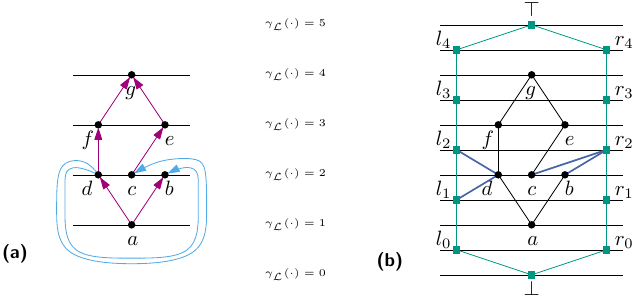}
	\caption{\textbf{\textsf{(a)}} An instance of \probname{Labeled Arched Level Planarity} and~\textbf{\textsf{(b)}} the obtained instance of \probname{Level Planarity}. We also visualize a drawing and color the frame and edges that simulate the arching edges from~\textbf{\textsf{(a)}} in green and blue, respectively. Only relevant subdivision vertices are shown.}
	\label{fig:single-exp-ql-reduction}
\end{figure}

\begin{statelater}{constArchLevelPlan}
Let $(G, \mathcal{L})$ be an instance of \probname{Arched Leveled Planarity}.
We use \Cref{lem:single-exp-ql-edge-labeling} to obtain a level assignment $\gamma_{\mathcal{L}}$ that is consistent with $\mathcal{L}$.
If no such assignment exists, we immediately return a negative instance.
Otherwise, we let $h = \max_{v \in V} \gamma_{\mathcal{L}}(v)$.
In our reduction, we introduce for each level $i \in [h]$ a dummy level $i + 0.5$ and, furthermore, three dummy levels $0$, $0.5$, and $h + 1$.
We now construct a graph $G'$ and initially set $V(G') = V(G)$ and $E(G') = \{uv \in A \mid \delta(uv) =\ \LabelOrdindary\}$.
Next, we initialize $\gamma' = \gamma_{\mathcal{L}}$ and create for each level $i \in [h] \cup \{0\}$ two vertices $l_{i}$ and $r_{i}$, and set $\gamma'(l_{i}) = \gamma'(r_{i}) = i + 0.5$.
In addition, we create the vertices $\bot$ and $\top$, and set $\gamma'(\bot) = 0$ and $\gamma'(\top) = h + 1$.
We connect these vertices in a cycle, i.e., have the edges $l_il_{i+1}$ and $r_ir_{i + 1}$ for each $i \in [h - 1] \cup \{0\}$ as well as $\bot l_0$, $\bot r_0$, $l_h\top$, and $r_h\top$, and refer to it as the \emph{frame}.
Note that the frame has a unique level planar drawing up to reflection.
Conceptually, we use the frame to route the arching edges around our drawing.
To do that, we consider each level $i \in [h]$ and every arc $uv \in A$ with $\delta(uv) =\ \LabelArch$ and $\gamma_{\mathcal{L}}(u) = i$, i.e., that arches on level $i$ and 
add the edges $ul_{i-1}$, $ul_{i}$, and $vr_{i}$ to $G'$.
This forces $u$ to become the leftmost vertex on level $i$ and $v$ to be right of any vertex adjacent to one on level $i + 1$.
If we connect two different vertices on the same level to the left side of the frame, we directly return a negative instance as in this case no arched leveled planar drawing can induce $\mathcal{L}$.
We obtain an instance $(G', \gamma')$ of \probname{Level Planarity} and with the following lemma, we establish the correctness of the reduction.

\begin{restatable}\restateref{lem:single-exp-ql-reduction-level-planarity}{lemma}{lemmaSingleExpReductionLP}
	\label{lem:single-exp-ql-reduction-level-planarity}
	The instance $(G, \mathcal{L})$ has an arched level planar embedding $\Gamma$ that induces $\mathcal{L}$ if and only if the instance $(G', \gamma')$ of \probname{Level Planarity} has a level planar drawing $\Gamma'$.
\end{restatable}
\begin{prooflater}{plemmaSingleExpReductionLP}
	We show both directions separately. 
	Note that if no edge is labeled as arching edge $(G, \mathcal{L})$, both instances are, besides the frame, identical.
	Since the frame can always be removed from a drawing of $(G', \gamma')$ or drawn next to an embedding of $(G, \mathcal{L})$, we focus on the more interesting case that includes arching edges.
	There, we use the frame to argue that arching edges are correctly represented in $(G', \gamma')$.
	Let $h$ be the number of levels in $\gamma_{\mathcal{L}}$.
	
	\proofsubparagraph*{($\boldsymbol{\Rightarrow}$)}
	Let $\Gamma$ be an arched level planar embedding that induces $\mathcal{L}$.
	We now construct based on $\Gamma$ a drawing $\Gamma'$ for $(G', \gamma')$.
	First, note that we have $\gamma'(v) = \gamma_{\mathcal{L}}(v)$ for every vertex $v \in V(G)$.
	We thus use in $\Gamma'$ on every level $i \in [h]$ the same vertex order as in $\Gamma$.
	Next, we draw in $\Gamma'$ the frame as shown in green in \Cref{fig:single-exp-ql-reduction}b, i.e., for each $i \in [h] \cup \{0\}$, we set $l_i \prec_{i + 0.5} r_i$.
	The remaining two vertices $\bot$ and $\top$ are the only ones on their respective level.
	We now have a total order among all vertices on the same level.
	The embedding $\Gamma$ guarantees that these orders allow us to draw all edges in $E(G) \cap E(G')$ i.e., those labeled as ordinary edges, in a planar way.
	Furthermore, we can draw these edges completely inside the frame, which ensures that no edge crosses edges from the frame.
	The remaining edges connect vertices of $G$ with the frame to model arching edges.
	Let $i \in [h]$ be a level that contains at least one arching edge in $\Gamma$.
	Recall that $G'$ contains edges of the form $ul_{i-1}$, $ul_{i}$, and $vr_i$, for each arching edge (arc) $uv \in A$, respectively.
	As $\Gamma$ is an arched level planar embedding that induces $\mathcal{L}$, all arching edges on level $i$ are incident to the same leftmost vertex in $\prec_i$, which is $u$. %
	Hence, $u$ is also the leftmost vertex on level $i$ in $\Gamma'$, which means that we can add the respective connections to the frame without introducing a crossing.
	Furthermore, for all other vertices $v$ that are endpoints of an arching edge on level $i$ are in $\Gamma$ we have $s_i \preceq_i v$, where $s_i$ is the rightmost vertex on level $i$ that has a neighbor on level $i+1$.
	Therefore, there is nothing that prevents us from inserting the edges between the frame and $r_i$ without introducing a crossing; see the blue edges in \Cref{fig:single-exp-ql-reduction}b.
	We conclude that $\Gamma'$ is a level planar drawing of $(G', \gamma')$.
	
	\proofsubparagraph*{($\boldsymbol{\Leftarrow}$)}
	Let $\Gamma'$ be a level planar drawing of $(G', \gamma')$.
	For the reasons given at the beginning of the proof, we focus only on the case where some edge in $G$ must arch according to $\mathcal{L}$.
	Therefore, there is at least one level $i \in [h]$ on which some vertices are connected to the frame.
	As $G$ and $G'$ are connected and the frame uses levels above and below the vertices of $V(G)$, the vertices of $G$ must be drawn inside the frame as in \Cref{fig:single-exp-ql-reduction}b.
	Recall that we have $\gamma'(v) = \gamma_{\mathcal{L}}(v)$ for every vertex $v \in V(G)$.
	Thus, we can for every level $i \in [h]$ use the vertex order of $\Gamma'$ to construct $\Gamma$ and it only remains to show that this order admits an arched level planar embedding that induces $\mathcal{L}$.
	Observe that all non-arching edges induce a level planar drawing $(G, \mathcal{L})$.
	Recall that for the arching edges, their endpoints are attached to the left and right side of the frame.
	Let $u$ be a vertex on level $i$ attached to the left side of the frame, i.e., an endpoint of an arc $uv \in A$ with $\delta(uv) = \LabelArch$.
	The triangle $u,l_i, l_{i-1}$ together with the fact that $G$ and $G'$ are connected ensures that $u$ is the leftmost vertex on level $i$ in $\Gamma'$ (and thus in $\Gamma$).
	Similarly, we have $s_i \preceq_i v$ for all vertices $v$ incident to $r_i$, where $s_i$ is again the rightmost vertex on level $i$ with a neighbor on level $i+1$.
	To obtain an arched leveled planar drawing that induces $\mathcal{L}$, we can now route the arching edges along the lower tip of the frame to ensure that they are crossing-free.
	Furthermore, all arching edges on the same level are incident to the same (leftmost) vertex; otherwise, we would have returned a trivial no instance when constructing $(G', \gamma'$).
	We conclude %
	that $\Gamma$ is %
	an arched level planar embedding of $G$ that induces $\mathcal{L}$.
\end{prooflater}
\end{statelater}

\thmsingleexp*
\begin{prooflater}{pthmsingleexp}
	Let $G$ be a graph on $n$ vertices.
	We can assume without loss of generality that $G$ is connected, otherwise, we apply the following arguments for each connected component of $G$.
	First, we check whether $G$ has at most $2n-3$ edges.
	If not, we can immediately reject thanks to a bound established by Heath and Rosenberg~\cite{heath1992laying}, see also Dujmovi{\'c} and Wood~\cite{dujmovic2004linear}.
	
	We branch on the $2^{\BigO{n}}$ possible labelings $\mathcal{L}=(A, \delta\colon A \to \LabelImage)$ of $G$.
	We use \Cref{lem:single-exp-ql-edge-labeling} in each branch to compute a level assignment $\gamma_{\mathcal{L}}$ of $V$ that is consistent with $\mathcal{L}$ or conclude that no such assignment exist.
	In the latter case, we reject the branch.
	In the former case, we can reduce the instance $(G, \mathcal{L})$ of \probname{Labeled Arched Level Planarity} to an instance $(G', \gamma')$ of \probname{Level Planarity}.
	We can solve the latter instance in linear time~\cite{JL.LPE.2002}.
	If it admits a level planar drawing $\Gamma'$, we convert it into an arched level planar embedding $\Gamma$ that induces $\mathcal{L}$ using the arguments behind \Cref{lem:single-exp-ql-edge-labeling,lem:single-exp-ql-reduction-level-planarity}.
	The spine order $\prec$ is then obtained by concatenating the individual (inverse) total orders in increasing level order as described by Heath and Rosenberg~\cite{heath1992laying}.
	If no branch admits the desired level planar drawing, we report that there does not exist a 1-page queue layout of $G$.
	This takes \BigO{n} time per branch, giving us an overall running time of $2^{\BigO{n}}$.
	
	It remains to show correctness of the algorithm.
	For the forward direction, assume that $G$ has a 1-page queue layout \queueL.
	Due to the already mentioned equivalence to arched leveled planar drawings, this means that $G$ has also such an embedding $\Gamma$, which induces a labeling $\mathcal{L}'$.
	Hence, we can compute in the respective branch a consistent level assignment $\gamma_{\mathcal{L}'}$ and by \Cref{lem:single-exp-ql-edge-labeling} there exists a (not necessarily identical) embedding $\Gamma'$ of $G$ that uses $\gamma_{\mathcal{L}'}$.
	Applying \Cref{lem:single-exp-ql-reduction-level-planarity}, we conclude that the constructed instance of \probname{Level Planarity} admits a drawing, i.e., we correctly report that $G$ admits a 1-page queue layout.	
	For the backward direction, we assume that one branch has a level planar drawing for the constructed instance $(G',\gamma')$.
	\Cref{lem:single-exp-ql-reduction-level-planarity} allows us to convert it into an arched level planar embedding of $G$ that induces $\mathcal{L}$.
	This witness of an arched leveled planar embedding of $G$ is sufficient to conclude that $G$ has a 1-page queue layout.
\end{prooflater}

\section{Concluding Remarks}
While our results improve the state of the art on computing linear layouts in several directions, they also highlight the prominent open questions in this area. In particular, Theorem~\ref{thm:vifpt} moves us closer to settling the long-standing open questions of whether treewidth or treedepth can be used to facilitate the computation of linear layouts~\cite{dujmovic2011book,bhore2020parameterized,ganian2021parameterized,bhore2022parameterized,GanianMOPR24}. At the same time, Theorems~\ref{thm:pagewidth} and~\ref{thm:singleexp} yield the question of whether these classical problems can be solved in single-exponential time.

\bibliography{references}

\ifthenelse{\boolean{cameraready}}{}{
\onlyShort{
	\appendix
	\newpage
    We use \stackL[G] and \queueL[G] to differentiate between stack and queue layouts, respectively. 
    
	\section{Omitted Details from Section~\ref{sec:fpt-vi}}
	\label{app:fpt-vi}

        \observationRamseySizeEquivClass*
        \label{obs:size-equiv-class*}
        \pobservationRamseySizeEquivClass

        \lemmaReducedComputation*
        \label{lem:reducedcomputation*}
        \plemmaReducedComputation

        \claimRamseyEdgeColors*
        \label{claim:edge-colors*}
        \pclaimRamseyEdgeColors

        \viBlockObservations

        \lemmaStructure*
        \label{lem:structure*}
        \plemmaStructure

        \lemmaLayoutSupergraph*
        \label{lem:layout-supergraph*}
        \proofLayoutSupergraph

	\newpage
	\section{Omitted Details from Section~\ref{sec:poly-pw}}
	\label{app:poly-pw}

  \stategraphFullDetails

  \subsection{Omitted Proofs}
	
	\lemmaPolyPwCuts*
	\label{lem:poly-pw-cuts*}
	\plemmaPolyPwCuts
			
	\lemmaPolyPwCutSide*
	\label{lem:poly-pw-cut-side*}
	\plemmaPolyPwCutSide
	
	\lemmaPolyPwCutInduce*
	\label{lem:poly-pw-cut-inudce-unique*}
	\plemmaPolyPwCutInduce
	
	\lemmaPolyPwFeasibleStates*
	\label{lem:poly-pw-feasible-states*}
	\plemmaPolyPwFeasibleStates
			
	\thmpagewidth*
	\label{thm:pagewidth*}
	\pthmpagewidth
	
	\newpage
	\section{Omitted Details from Section~\ref{sec:single-exp-ql}}
	\label{app:sec:single-exp-ql}
	
	\lemmaSingleExpEdgeLabeling*
	\label{lem:single-exp-ql-edge-labeling*}
	\plemmaSingleExpEdgeLabeling

    \subparagraph*{Reducing from labeled \probname{Arched Leveled Planarity} to \probname{Level Planarity}.}
    \constArchLevelPlan
	\plemmaSingleExpReductionLP
	
	\thmsingleexp*
	\label{thm:singleexp*}
	\pthmsingleexp

}
}

\end{document}